\documentclass[letterpaper, 10 pt, conference]{ieeeconf}  % Comment this line out
\IEEEoverridecommandlockouts                              % This command is only
\usepackage{cite}
\usepackage{amsmath}
\usepackage{amssymb}
\usepackage{algorithmic}
\usepackage{graphicx}
\usepackage{textcomp}
\usepackage{xcolor}
\usepackage{url}
\usepackage[hidelinks]{hyperref}

\usepackage[english]{babel}
\usepackage{graphicx}
\def\BibTeX{{\rm B\kern-.05em{\sc i\kern-.025em b}\kern-.08em
    T\kern-.1667em\lower.7ex\hbox{E}\kern-.125emX}}
\newcounter{maxpages}
\renewcommand\thesubsection{\Alph{subsection}}

\usepackage{multicol}
\usepackage{xspace}
\usepackage{mathtools}
\usepackage{booktabs}
\usepackage{multirow}
\usepackage{tabularx}
\newcolumntype{C}{>{\centering\arraybackslash}X}
\newcolumntype{L}{>{\raggedright\arraybackslash}X}
\usepackage{siunitx}
\usepackage{listings}
\usepackage{float}
\usepackage{bm}         % Negritas en modo matemático
\usepackage[normalem]{ulem}

\usepackage{amsmath,amssymb,amsfonts}
\usepackage{tikz}
\usetikzlibrary{arrows.meta, positioning, fit}

\input{tikz.tex}

\usepackage{bm}

\newtheorem{assumption}{Assumption}
\newtheorem{lemma}{Lemma}
\newtheorem{theorem}{Theorem}
\newtheorem{corollary}{Corollary}
\newtheorem{prop}{Proposition}
\newtheorem{remark}{Remark}

\DeclareMathOperator{\tr}{tr}

\begin{document}

\title{Learning-Based Geometric Leader–Follower Control for\\ Cooperative Rigid-Payload Transport with Aerial Manipulators

}
%Clifford Algebra and Applications to Navigation Systems\\
%\begin{comment}
\author{
    Omayra Yago Nieto$^1$, Leonardo Colombo$^2$
   \thanks{$^1$ Omaya Yago Nieto is with Universidad Politécnica de Madrid, Spain. {\tt\small omayra.yago.nieto@alumnos.upm.es}}
    \thanks{$^2$ L. Colombo is with Centre for Automation and Robotics (CSIC-UPM), Ctra. M300 Campo Real, Km 0,200, Arganda
del Rey - 28500 Madrid, Spain.{\tt\small leonardo.colombo@csic.es}}
\thanks{The author acknowledge financial support from Grant PID2022-137909NB-C21 funded by MCIN/AEI/ 10.13039/501100011033. The research leading to these results was supported in part by iRoboCity2030-CM, Robótica Inteligente para Ciudades Sostenibles (TEC-2024/TEC-62), funded by the Programas de Actividades I+D en Tecnologías en la Comunidad de Madrid.}
}
%%

%\end{comment}

\maketitle
\thispagestyle{empty}
\pagestyle{empty}

\begin{abstract}
This paper presents a learning-based tracking control framework for cooperative
transport of a rigid payload by multiple aerial manipulators under rigid grasp
constraints. A unified geometric model is developed, yielding a coupled
agent--payload differential--algebraic system that explicitly captures contact
wrenches, payload dynamics, and internal force redundancy. A leader--follower
architecture is adopted in which a designated leader generates a desired payload
wrench based on geometric tracking errors, while the remaining agents
realize this wrench through constraint-consistent force allocation.

Unknown disturbances and modeling uncertainties are compensated using Gaussian
Process (GP) regression. High-probability bounds on the learning error are
explicitly incorporated into the control design, combining GP feedforward
compensation with geometric feedback. Lyapunov analysis establishes uniform
ultimate boundedness of the payload tracking errors with high probability, with
an ultimate bound that scales with the GP predictive uncertainty.
\end{abstract}

%\renewcommand{\IEEEkeywordsname}{Keywords} 
%\begin{IEEEkeywords}Integrated Navigation, Trident Quaternions, Unmanned Aerial Vehicles, Sensor Fusion.\end{IEEEkeywords}

\section{Introduction} 
\label{sec:Intro}

Cooperative aerial transportation of large or heavy payloads has attracted
significant attention in recent years due to its relevance to inspection,
construction, logistics, and disaster-response applications
\cite{mellinger2011minimum, ryll2016modeling}.
Compared to single-vehicle aerial manipulation, cooperative transport increases
payload capacity and robustness through redundancy, but introduces challenging
coordination requirements and tightly coupled agent--payload dynamics.
These challenges are amplified under rigid grasp constraints, where contact
forces, internal-force redundancy, and payload dynamics must be handled
consistently within the control architecture.

Early approaches relied on simplified load models (e.g., cable-suspended or
quasi-static payloads), enabling decentralized or consensus-based strategies
\cite{goodman2023geometric, michael2011cooperative}.
More recent works address rigid payloads grasped by multiple aerial vehicles,
which require explicit wrench allocation and constraint handling
\cite{ollero2021past, wang2018cooperative}.
Geometric control on $SE(3)$ is particularly effective in this setting,
as they provide singularity-free tracking laws for translation and attitude
\cite{lee2010geometric, sreenath2013geometric}.

Despite this progress, most controllers presume accurate knowledge of vehicle
and payload dynamics.
In practice, cooperative aerial manipulation is affected by model mismatch,
aerodynamic disturbances, and uncertain or time-varying payload properties,
which can degrade tracking performance and jeopardize grasp consistency.
Learning-based augmentation is therefore appealing to compensate unmodeled
effects while retaining analytic guarantees.
Gaussian Process (GP) regression is especially attractive due to its
nonparametric modeling capability and uncertainty quantification
\cite{rasmussen2006gaussian, berkenkamp2017safe, beckers2021onlinem, beckers2021online},
but extending GP-based guarantees to rigidly grasped cooperative transport is
nontrivial because of holonomic constraints, internal-force redundancy, and
agent--payload coupling.

This paper develops a learning-augmented geometric tracking framework for
cooperative transport of a rigid payload by multiple aerial manipulators under
rigid grasp constraints.
We adopt a hierarchical leader--follower architecture: a leader generates a
desired payload wrench from geometric tracking errors, while the remaining
agents realize this wrench through constraint-consistent allocation and regulate
internal forces.
Learning is incorporated via GP regression to compensate unknown disturbances
with explicit high-probability error bounds used directly in the control design
and stability analysis.

The paper provides: (i) a unified geometric model leading
to coupled agent--payload DAE dynamics with explicit contact wrenches and rigid
grasp constraints; (ii) a leader--follower wrench generation/allocation scheme
that handles internal-force redundancy under thrust-direction underactuation;
and (iii) a GP-augmented controller with a Lyapunov analysis establishing
high-probability uniform ultimate boundedness of the payload tracking errors.

The remainder of the paper is as follows. Section~II derives the geometric model and coupled
dynamics. Section~III summarizes the GP learning model and error bounds.
Section~IV presents the controller and stability analysis. Section~V reports
simulation results, and Section~VI concludes.

%\begin{figure}[h!]
 %   \centering
 % \includegraphics[width=\columnwidth]{into.png}
  %  \caption{Cooperative aerial manipulation with multiple thrust-based UAVs equipped with robotic manipulators transporting a common rigid payload in a leader--follower configuration.}
   % \label{fig:intro}
%\end{figure}

\section{System Modeling: Single and Cooperative Aerial Manipulators}
\label{sec:ps}

This section develops a unified geometric model, from a single aerial manipulator to multi-agent and rigid-payload coupled dynamics. We first present the multibody Lie-group formulation, then derive the reduced thrust-based translational dynamics used for learning and control, and finally introduce the cooperative and payload-coupled extensions.

%Let $R\in \SO$ be a rotation matrix, $p\in\R^3$ a position vector, and $e_3 := [0\;\;0\;\;1]^\top$.
%We denote by $\hat{\cdot}:\R^3\to \mathfrak{so}(3)$ the standard hat map such that $\hat{\omega}v = \omega \times v$.

\subsection{Aerial manipulator as a multibody system.}
\label{subsec:geom_prelim}

%We consider a multirotor UAV consisting of $n{+}1$ interconnected rigid bodies arranged in a tree structure %(Fig.~\ref{fig1}).

%\begin{figure}[H]
 %   \centering
  %  \includegraphics[width=0.3\textwidth]{blue_to_gray.png}
   % \caption{Multi-body aerial robot with manipulator arms.}
   % \label{fig1}
%\end{figure}

We consider a multirotor UAV consisting of $n{+}1$ interconnected rigid bodies arranged in a tree structure. Let $R\in SO(3)$ and denote by $\mathfrak{so}(3)$ its Lie algebra. We use the standard hat/vee maps
$\hat{(\cdot)}:\mathbb R^3\to\mathfrak{so}(3)$ and $(\cdot)^\vee:\mathfrak{so}(3)\to\mathbb R^3$.
The configuration of the $i$th body is $g_i\in SE(3)$, written as
$g_i=\begin{pmatrix}R_i & \mathbf x_i\\ 0&1\end{pmatrix}$, where $R_i\in SO(3)$ and
$\mathbf x_i\in\mathbb R^3$ are the attitude and center-of-mass position.
The body twist is $\xi_i\in\mathfrak{se}(3)$ with matrix form
$\xi_i=\begin{pmatrix}\hat\omega_i & \mathbf v_i\\ 0&0\end{pmatrix}$, where
$\omega_i,\mathbf v_i\in\mathbb R^3$.

The manipulator configuration is described by joint coordinates $\mathbf r\in\mathbb R^{n_r}$ and joint torques
$\boldsymbol{\tau}_r\in\mathbb R^{n_r}$, where $n_r$ is the number of manipulator joint DoF. We use base variables
$q_0=(R_0,\mathbf x_0,\mathbf r)$ and $\xi_0=(\omega_0,\mathbf v_0,\dot{\mathbf r})$, where $\mathbf v_0=\dot{\mathbf x}_0$.
Let $g_i=g_0\,g_{0i}(\mathbf r)$ denote the relative transform from the base to body $i$.
The multibody kinetic energy induces a configuration-dependent mass matrix $\mathbf M_0(\mathbf r)$ in base coordinates
(see, e.g.,~\cite{marin}), which we partition as
\begin{equation}
\mathbf M_0(\mathbf r)=
\begin{bmatrix}
\mathbb J(\mathbf r) & \mathbf C(\mathbf r)^{\!\top} & \mathbf M_{\omega\dot r}(\mathbf r)\\
\mathbf C(\mathbf r) & m I_3 & \mathbf M_{v\dot r}(\mathbf r)\\
\mathbf M_{\omega\dot r}(\mathbf r)^{\!\top} & \mathbf M_{v\dot r}(\mathbf r)^{\!\top} & \mathbf M_{\dot r\dot r}(\mathbf r)
\end{bmatrix},
\label{eq:M0_partition}
\end{equation}
where $m:=\sum_{i=0}^{n} m_i$ is the total mass.
\smallskip

% For each body $i\in\{0,\dots,n\}$, let $g_i\in\SE(3)$ denote its pose, and assume the relative transforms satisfy
% \begin{equation}
% g_i = g_0\, g_{0i}(r),\qquad i=1,\dots,n.
% \label{eq:relative_transforms}
% \end{equation}
% Let $\xi_0=(\omega,v,\dot r)$ denote the base body twist (angular velocity $\omega\in\R^3$, body linear velocity $v\in\R^3$) augmented with joint rates. The pose inertia of body $i$ is
% \begin{equation}
% \mathbb{I}_i=\begin{bmatrix}\mathbb{J}_i & 0\\0 & m_i I_3\end{bmatrix},
% \end{equation}
% where $\mathbb{J}_i$ is the rotational inertia and $m_i$ its mass. The total mass is $m=\sum_{i=0}^n m_i$.

Following~\cite{marin}, define the instantaneous center-of-mass position $\displaystyle{\mathbf x \;:=\; \frac{1}{m}\sum_{i=0}^{n} m_i\,\mathbf x_i \in\mathbb R^3}$, and perform the standard change of variables $R:=R_0$, $\omega:=\omega_0$,
\begin{equation}
\mathbf v \;:=\; \mathbf v_0 + \frac{1}{m}\Big(\mathbf M_{v\dot r}(\mathbf r)\dot{\mathbf r} + \mathbf C(\mathbf r)\omega\Big),
\label{eq:vbar_def}
\end{equation}
which diagonalizes the translational block of the mass matrix. We use the reduced state
$\bar{\mathbf q} \;:=\; \big(R,\mathbf x,\mathbf r,\omega,\mathbf v,\dot{\mathbf r}\big)$.

Let $\mathbf e_3=(0,0,1)^\top$ and let $u>0$ denote the lift magnitude along $R\mathbf e_3$, while
$\boldsymbol{\tau}\in\mathbb R^3$ is the body torque acting on the base. The nominal dynamics can be written as
\begin{align}
m\ddot{\mathbf x} &= m\mathbf a_g + R\mathbf e_3\,u, 
\qquad \dot R = R\hat\omega,
\label{eq:nom_trans_rot}\\
\begin{bmatrix}\omega\\ \dot{\mathbf r}\end{bmatrix}
&= \bar{\mathbf M}(\mathbf r)^{-1}\begin{bmatrix}\boldsymbol{\mu}\\ \boldsymbol{\nu}\end{bmatrix},
\label{eq:nom_kin_coup}\\
\begin{bmatrix}\dot{\boldsymbol{\mu}}\\ \dot{\boldsymbol{\nu}}\end{bmatrix}
&=
\begin{bmatrix}
\boldsymbol{\mu}\times\omega\\[1mm]
\frac{1}{2}\bar{\boldsymbol{\xi}}^{\!\top}\,\partial \bar{\mathbf M}(\mathbf r)\,\bar{\boldsymbol{\xi}}
\end{bmatrix}
+
\begin{bmatrix}
\boldsymbol{\tau} - \mathbf C(\mathbf r)^{\!\top} \mathbf e_3\,u/m\\[1mm]
\boldsymbol{\tau}_r - \mathbf M_{v\dot r}(\mathbf r)^{\!\top}\mathbf e_3\,u/m
\end{bmatrix},
\label{eq:nom_momentum}
\end{align}
where $\bar{\boldsymbol{\xi}}:=(\omega,\dot{\mathbf r})$ and the reduced mass matrix is
\begin{equation}
    \bar{\bf{M}}=\begin{bmatrix}
    \mathbb{J}-C^{T}C/m & \mathbf{M}_{\omega\dot{r}}-C^{T}\mathbf{M}_{v\dot{r}}/m\\
   \mathbf{M}_{\omega\dot{r}}^{T}-\mathbf{M}_{v\dot{r}}^{T}C/m& \mathbf{M}_{\dot{r}\dot{r}}-\mathbf{M}_{v\dot{r}}^{T}\mathbf{M}_{v\dot{r}}/m
    \end{bmatrix}.
\end{equation}
Here $\mathbf a_g\in\mathbb R^3$ is the gravitational acceleration expressed in the inertial frame.

In realistic scenarios, \eqref{eq:nom_trans_rot}--\eqref{eq:nom_momentum} are affected by unmodeled and uncertain
effects (e.g., aerodynamic forces, actuator dynamics, friction, payload variations, and couplings induced by manipulator motion).
We represent these effects by unknown smooth functions of the reduced state $\bar{\mathbf q}$:
\[
\mathbf f_{uk}^{x}:\mathcal X\to\mathbb R^{3},\,
\mathbf f_{uk}^{\omega}:\mathcal X\to\mathbb R^{3},\,
\mathbf f_{uk}^{\dot r}:\mathcal X\to\mathbb R^{n_r},
\] where $\mathcal X\subseteq SO(3)\times\mathbb R^3\times\mathbb R^{n_r}\times\mathbb R^3\times\mathbb R^3\times\mathbb R^{n_r}$. The uncertain dynamics then read
\begin{align}
m\ddot{\mathbf x} &= m\mathbf a_g + R\mathbf e_3\,u + \mathbf f_{uk}^{x}(\bar{\mathbf q}),
\qquad \dot R = R\hat\omega,
\label{eq:sys_unc_1}\\
\begin{bmatrix}\omega\\ \dot{\mathbf r}\end{bmatrix}
&= \bar{\mathbf M}(\mathbf r)^{-1}\begin{bmatrix}\boldsymbol{\mu}\\ \boldsymbol{\nu}\end{bmatrix},
\label{eq:sys_unc_2}\\
\begin{bmatrix}\dot{\boldsymbol{\mu}}\\ \dot{\boldsymbol{\nu}}\end{bmatrix}
&=
\begin{bmatrix}
\boldsymbol{\mu}\times\omega\\[1mm]
\frac{1}{2}\bar{\boldsymbol{\xi}}^{\!\top}\,\partial \bar{\mathbf M}(\mathbf r)\,\bar{\boldsymbol{\xi}}
\end{bmatrix}
+
\begin{bmatrix}
\boldsymbol{\tau} - \mathbf C(\mathbf r)^{\!\top} \mathbf e_3\,u/m\\[1mm]
\boldsymbol{\tau}_r - \mathbf M_{v\dot r}(\mathbf r)^{\!\top}\mathbf e_3\,u/m
\end{bmatrix}
\\&+
\begin{bmatrix}
\mathbf f_{uk}^{\omega}(\bar{\mathbf q})\\
\mathbf f_{uk}^{\dot r}(\bar{\mathbf q})\nonumber
\end{bmatrix}.
\label{eq:sys_unc_3}
\end{align}

\subsection{Cooperative aerial manipulation system}
\label{subsec:dae_model}

We now write the dynamics of the cooperative aerial manipulation system in a unified
differential--algebraic equation (DAE) form, explicitly accounting for the rigid payload,
the holonomic grasp constraints, and the contact wrenches generated by each aerial manipulator.

For each aerial manipulator $j\in\mathcal \{1,\dots,N\}$, we consider the reduced state $\bar q_j=(\mathbf q_j,\xi_j)=(R_j,x_j,\mathbf r_j,\omega_j,v_j,\dot{\mathbf r}_j)$, as defined in Section II-\ref{subsec:geom_prelim}.
The payload state is given by $q_L:=(p_L,v_L,R_L,\omega_L)$, where $v_L:=\dot p_L$.
The complete system state is $z := \Big(\{\bar q_j\}_{j=1}^N,\; q_L\Big)$.

Each UAV agent $j\in\{1,\dots,N\}$ is equipped with $n$ manipulator arms.
Each arm $b\in\{1,\dots,n\}$ establishes a single rigid contact with the payload.
The corresponding contact forces are denoted by $\lambda_{j,b}\in\mathbb R^3$, and we define the stacked
contact force vector $\lambda :=
\begin{bmatrix}
\lambda_{1,1}^\top & \cdots & \lambda_{1,n}^\top &
\cdots &
\lambda_{N,1}^\top & \cdots & \lambda_{N,n}^\top
\end{bmatrix}^{\top}
\in\mathbb R^{3Nn}$.

Let $c^L_{j,b}\in\mathbb R^3$ be the fixed attachment point expressed in the payload frame,
and let $p_{j,b}(\mathbf q_j)$ denote the inertial position of the corresponding contact
point on UAV $j$ (arm $b$).
The rigid grasp constraints, for $b=1,\dots,n$ and $j=1,\dots,N$, are
\begin{equation}\label{eq:dae_constraints}
\phi_{j,b}(z) :=
p_{j,b}(\mathbf q_j) - \big(p_L + R_L c^L_{j,b}\big) = 0.
\end{equation}

Differentiating twice yields the acceleration-level compatibility conditions used implicitly
in the DAE formulation.

Each aerial manipulator obeys the thrust-based reduced dynamics augmented by contact forces:
\begin{align}
\dot R_j &= R_j\hat\omega_j,\label{eq:dae_agent_R}\\
m_j\ddot x_j &= m_j\mathbf a_g + R_j\mathbf e\,u_j
+ \mathbf f^{x}_{uk,j}(\mathbf q_j,\xi_j)
+ f^{c}_{x,j}(\lambda),\label{eq:dae_agent_x}\\
\begin{bmatrix}\omega_j\\ \dot{\mathbf r}_j\end{bmatrix}
&=\bar{\mathbf M}_j^{-1}(\mathbf r_j)
\begin{bmatrix}\mu_j\\ \nu_j\end{bmatrix},\label{eq:dae_agent_kin}\\
\begin{bmatrix}\dot\mu_j\\ \dot\nu_j\end{bmatrix}
&=
\begin{bmatrix}
\mu_j\times\omega_j\\
\frac12\bar\xi_j^\top\partial\bar{\mathbf M}_j(\mathbf r_j)\bar\xi_j
\end{bmatrix}
+
\begin{bmatrix}
\tau_j - C_j^\top\mathbf e\,u_j/m_j\\
\tau_{r,j}-\mathbf M_{j,v\dot{r}}^\top\mathbf e\,u_j/m_j
\end{bmatrix}\nonumber\\
&\quad+
\begin{bmatrix}
\mathbf f^{\omega}_{uk,j}(\mathbf q_j,\xi_j)\\
\mathbf f^{\dot r}_{uk,j}(\mathbf q_j,\xi_j)
\end{bmatrix}
-\sum_{b=1}^{n}
J_{c,j,b}(\mathbf q_j)^\top\,\lambda_{j,b},
\label{eq:dae_agent_mom}
\end{align}
where $J_{c,j,b}(\mathbf q_j)$ denotes the contact Jacobian associated with contact $(j,b)$.
The term $f^{c}_{x,j}(\lambda)$ represents the contribution of the contact forces to the
translational channel, which may be omitted if all contact effects are accounted for in
\eqref{eq:dae_agent_mom}.

The payload kinematics is given by $\dot p_L = v_L$, where $v_L\in\mathbb R^3$ is the payload linear velocity. The payload is driven by the net contact wrench and is subject to unmodeled effects:
\begin{align}
m_L\dot v_L &=
\sum_{j=1}^{N}\sum_{b=1}^{n} \lambda_{j,b}
+ f_{g,L}
\nonumber\\&
+ \mathbf f^{p}_{uk,L}(p_L,v_L,R_L,\omega_L),
\label{eq:dae_payload_trans}\\
J_L\dot\omega_L + \omega_L\times(J_L\omega_L)
&=
\sum_{j=1}^{N}\sum_{b=1}^{n}
c^L_{j,b}\times\big(R_L^\top\lambda_{j,b}\big)\nonumber
\\&+ \mathbf f^{\omega}_{uk,L}(p_L,v_L,R_L,\omega_L).
\label{eq:dae_payload_rot}
\end{align}

Defining the payload wrench $W_L=[F_L;\tau_L]\in\mathbb R^6$, the contact forces are mapped
to the payload dynamics via the grasp matrix $W_L = G(R_L)\,\lambda$
\begin{align*}
G(R_L)&=
\begin{bmatrix}
I_3 & \cdots & I_3\\
\widehat{R_L c^L_{1,1}} & \cdots & \widehat{R_L c^L_{N,n}}
\end{bmatrix}
\in\mathbb R^{6\times 3Nn},
\end{align*}
where the columns are ordered with the stacking of $\lambda$.

We stack the contact forces as $\lambda\in\mathbb R^{3Nn}$ and the holonomic constraints as
$\Phi(z)\in\mathbb R^{3Nn}$.
Assuming identical manipulators $\mathbf r_j\in\mathbb R^{n_r}$ for all $j$, the differential
state dimension is $\dim(z)= N(12+2n_r)+12 \;=\; 12(N+1)+2Nn_r$.

Equations \eqref{eq:dae_agent_R}--\eqref{eq:dae_payload_rot}, together with the algebraic
constraints \eqref{eq:dae_constraints}, define a coupled DAE with $12(N+1)+2Nn_r$ differential
states and $3Nn$ algebraic variables $\lambda$ subject to $3Nn$ holonomic constraints of the form
\begin{equation}\label{eq:dae_compact}
\begin{aligned}
\dot z &= F(z,u,\lambda) + F_{uk}(z),\quad 0= \Phi(z),\\
\end{aligned}
\end{equation}
where $\Phi(z)$ stacks the holonomic grasp constraints. Here $u\in\mathcal U$ is the stacked vector of differential control inputs of all agents,
e.g., $u=\mathrm{col}(u_1,\tau_1,\tau_{r,1},\ldots,u_N,\tau_N,\tau_{r,N})$.

%This formulation serves as the basis for the leader--follower wrench generation and force
%allocation strategy, as well as for learning-based compensation of uncertainties acting on
%both the aerial manipulators and the payload.

\section{Learning with Gaussian Processes}
\label{sec:learning}

Building on the learning-based tracking framework in~\cite{YagoColomboArms}, we consider cooperative transport of a rigid payload
by a team of $N$ UAV agents. The coupled closed-loop model in Section~II-\ref{subsec:dae_model} is affected by unknown and unmodeled
effects including aerodynamic disturbances, actuator dynamics, manipulator--payload coupling errors, and payload parameter mismatch.
We model these effects as additive, state-dependent, and time-invariant generalized disturbance terms on the time scale of interest.

% ===========================
% Unknown maps and predictors
% ===========================
For each UAV agent $j\in\{1,\dots,N\}$ we consider unknown maps
$\mathbf f_{uk,j}^{x}$, $\mathbf f_{uk,j}^{\omega}$, $\mathbf f_{uk,j}^{\dot r}:\; SE(3)\times\mathfrak{se}(3)\rightarrow\mathbb R^3$,
acting on the translational, rotational, and joint/momentum dynamics of agent $j$, evaluated at the reduced state
$\bar q_j=(\mathbf q_j,\xi_j)=(R_j,x_j,\mathbf r_j,\omega_j,v_j,\dot{\mathbf r}_j)$ (with $\mathbf r_j\in\mathbb R^{n_r}$ for all $j$).
In addition, the payload dynamics include unknown force/moment terms
$\mathbf f_{uk,L}^{p}$, $\mathbf f_{uk,L}^{\omega}:\; SE(3)\times\mathfrak{se}(3)\rightarrow\mathbb R^3$,
evaluated at the payload state $q_L:=(p_L,v_L,R_L,\omega_L)$.

We denote by
\[
\hat{\mathbf f}_{uk,j}(\bar q_j):=
\begin{bmatrix}
\hat{\mathbf f}_{uk,j}^{x}(\bar q_j)\\
\hat{\mathbf f}_{uk,j}^{\omega}(\bar q_j)\\
\hat{\mathbf f}_{uk,j}^{\dot r}(\bar q_j)
\end{bmatrix},\quad 
\hat{\mathbf f}_{uk,L}(q_L):=
\begin{bmatrix}
\hat{\mathbf f}_{uk,L}^{p}(q_L)\\
\hat{\mathbf f}_{uk,L}^{\omega}(q_L)
\end{bmatrix}
\]
the learned predictors used for feedforward compensation. In the GP setting below,
$\hat{\mathbf f}_{uk,j}\in\mathbb R^9$ and $\hat{\mathbf f}_{uk,L}\in\mathbb R^6$ are chosen as GP posterior means, with associated predictive covariance matrices
denoted $\Sigma_j(\cdot)$ and $\Sigma_L(\cdot)$.

% ===========================
% Online datasets (piecewise-constant updates)
% ===========================
We introduce time-varying datasets collected during execution. We consider piecewise-constant datasets over intervals
$t\in[t_k,t_{k+1})$ with $0=t_0<t_1<t_2<\cdots$. Let $\mathcal D_j(t)$ and $\mathcal D_L(t)$ denote the streaming datasets.
On each interval $[t_k,t_{k+1})$ we freeze the data at the update time by setting
$\mathcal D_{j,k}:=\mathcal D_j(t_k)$ and $\mathcal D_{L,k}:=\mathcal D_L(t_k)$.
The corresponding (time-indexed) predictors are denoted $\hat{\mathbf f}_{uk,j,k}$ and $\hat{\mathbf f}_{uk,L,k}$.

% ===========================
% Agent learning outputs (labels)
% ===========================
For each agent $j$, define the learning output
$y_j :=
\begin{bmatrix}
y^{x}_j &
y^{\omega}_j &
y^{\dot r}_j
\end{bmatrix}^{\top}
\in\mathbb R^9$ by rearranging the agent dynamics and subtracting estimated contact contributions.
Let $\hat\lambda_{j,b}\in\mathbb R^3$ denote the estimated contact forces at the attachment points $c^L_{j,b}$ (payload frame),
with $j\in\{1,\dots,N\}$ and $b\in\{1,\dots,n\}$, and let $\hat\lambda$ denote the corresponding stacked vector.
Then we set
\begin{align}
y^{x}_j &:= m_j\ddot{x}_j - m_j\mathbf a_g - R_j\mathbf e\, u_j - f^{c}_{x,j}(\hat\lambda), \nonumber\\
y^{\omega}_j &:= \dot\mu_j - \mu_j\times\omega_j
+ C_j^{\top}\mathbf e\, u_j/m_j \\&- \tau_j
+ \sum_{b=1}^{n} \big(J_{c,j,b}(\mathbf q_j)^\top \hat\lambda_{j,b}\big)_{(1:3)}, \nonumber\\
y^{\dot r}_j &:= \dot\nu_j
- \tfrac12 \bar\xi_j^{\top}\partial\bar{\mathbf M}_j(\mathbf r_j)\bar\xi_j
+ \mathbf M_{j,v\dot{r}}^{\top}\mathbf e\, u_j/m_j \\&- \tau_{r,j}
+ \sum_{b=1}^{n} \big(J_{c,j,b}(\mathbf q_j)^\top \hat\lambda_{j,b}\big)_{(4:3+n_r)}.
\label{eq:learning_output_components}
\end{align}
We assume the contact generalized-force term $J_{c,j,b}(\mathbf q_j)^\top\hat\lambda_{j,b}\in\mathbb R^{3+n_r}$ is stacked consistently
with $\begin{bmatrix}\mu_j\\ \nu_j\end{bmatrix}$, so that its first $3$ components act on the $\mu_j$-channel and the remaining $n_r$
components act on the $\nu_j$-channel.

Under the coupled DAE dynamics, these labels satisfy
$y^{x}_j=\mathbf f_{uk,j}^{x}$,
$y^{\omega}_j=\mathbf f_{uk,j}^{\omega}$,
$y^{\dot r}_j=\mathbf f_{uk,j}^{\dot r}$,
up to measurement noise and numerical differentiation errors.

% --- online dataset for agent j ---
We collect streaming data pairs $(\bar q_j(t),\tilde y_j(t))$, where $\tilde y_j=y_j+\eta_j$ and $\eta_j$ models measurement noise, and define
the time-varying dataset
\begin{equation}
\mathcal D_j(t)=\{(\bar q_j^{\{i\}},\,\tilde y_j^{\{i\}})\}_{i=1}^{N_j(t)}.
\label{eq:dataset_agent_online}
\end{equation}
On $t\in[t_k,t_{k+1})$ we freeze $\mathcal D_{j,k}:=\mathcal D_j(t)$ and compute the oracle
$\hat{\mathbf f}_{uk,j,k}(\cdot)=\mu_{j,k}(\cdot\,|\,\mathcal D_{j,k})$ as the GP posterior mean, with covariance $\Sigma_{j,k}(\cdot)$.

% ===========================
% Payload learning outputs (labels)
% ===========================

Define the payload learning output
$y_L :=
\begin{bmatrix}
y_L^{p} &
y_L^{\omega}
\end{bmatrix}^{\top}
\in\mathbb R^6$ by
\begin{align*}
y_L^{p}
&:= m_L\dot v_L - \sum_{j=1}^{N}\sum_{b=1}^{n} \hat\lambda_{j,b} - f_{g,L},\\
y_L^{\omega}
&:= J_L\dot\omega_L + \omega_L\times(J_L\omega_L)
- \sum_{j=1}^{N}\sum_{b=1}^{n}
c^L_{j,b}\times\!\big(R_L^\top\hat\lambda_{j,b}\big).
\end{align*}
Then, under the coupled DAE dynamics,
$y_L^{p}=\mathbf f_{uk,L}^{p}$ and $y_L^{\omega}=\mathbf f_{uk,L}^{\omega}$,
up to measurement noise and estimation errors (including $\hat\lambda$ and differentiation).

Let $\mathcal Q_L := \mathbb R^3\times\mathbb R^3\times SO(3)\times\mathbb R^3$ denote the payload state space,
with $q_L=(p_L,v_L,R_L,\omega_L)\in\mathcal Q_L$. We collect the time-varying payload dataset
\begin{equation}
\mathcal D_L(t)=\{(q_L^{\{i\}},\,\tilde y_L^{\{i\}})\}_{i=1}^{N_L(t)}
\label{eq:dataset_payload_online}
\end{equation}
where $\tilde y_L=y_L+\eta_L$.
On $t\in[t_k,t_{k+1})$ we freeze $\mathcal D_{L,k}:=\mathcal D_L(t)$ and compute the oracle
$\hat{\mathbf f}_{uk,L,k}(\cdot)=\mu_{L,k}(\cdot\,|\,\mathcal D_{L,k})$ as the GP posterior mean, with covariance $\Sigma_{L,k}(\cdot)$.

%========================================================
\begin{assumption}
\label{ass:dataset}
For each agent $j$ and for the payload, the corresponding training data sets
$\mathcal D_j(t)$ and $\mathcal D_L(t)$ are updated only finitely many times.
In particular, there exists a time $T_{\mathrm{end}}\in\mathbb{R}_{\ge 0}$ such that,
for all $t\ge T_{\mathrm{end}}$, $\mathcal D_j(t)=\mathcal D_j^{\mathrm{end}}$, $\forall j$, $\mathcal D_L(t)=\mathcal D_L^{\mathrm{end}}$.
\end{assumption}

Let $S_{\mathcal X}\subset \mathcal Z$ be a compact subset of the coupled agent--payload state space
$\mathcal Z := \mathcal X^N \times \mathcal Q_L$ containing the relevant closed-loop trajectories.
Assume that $\hat{\mathbf f}_{uk,j}$ and $\hat{\mathbf f}_{uk,L}$ are continuously differentiable with bounded derivatives on $S_{\mathcal X}$.
For the stability analysis, we require high-probability bounds on the learning error that depend on the confidence level.

\begin{assumption} \label{ass:oracle_bound}For any confidence levels $\delta_L\in(0,1)$ and
$\delta_j\in(0,1)$, $j\in\{1,\dots,N\}$, there exist bounded functions
$\bar\rho_j(\cdot;\delta_j):\mathcal{X}\to\mathbb{R}_{\ge 0}$ and
$\bar\rho_L(\cdot;\delta_L):Q_L\to\mathbb{R}_{\ge 0}$ such that, for all $z\in S_\mathcal{X}$,
\begin{align}
\mathbb{P}\!\left(\|f_{\mathrm{uk},j}(\bar q_j)-\hat f_{\mathrm{uk},j}(\bar q_j)\|
\le \bar\rho_j(\bar q_j;\delta_j)\right) \ge \delta_j,\ \ \forall j, \label{eq:ass2_agent}
\\
\mathbb{P}\!\left(\|f_{\mathrm{uk},L}(q_L)-\hat f_{\mathrm{uk},L}(q_L)\|
\le \bar\rho_L(q_L;\delta_L)\right) \ge \delta_L. \label{eq:ass2_payload}
\end{align}
\end{assumption}

\begin{remark}\label{rem1} For a single global confidence level $\delta\in(0,1)$, define $\delta_L := 1-\frac{1-\delta}{N+1}$ and $\delta_j := 1-\frac{1-\delta}{N+1}$, $j=1,\dots,N$. Then, by a union bound, the joint confidence event
$\Omega_\delta := \Omega_L(\delta_L)\cap\bigcap_{j=1}^N\Omega_j(\delta_j)$
satisfies $\mathbb{P}(\Omega_\delta)\ge \delta$.
(Any other split satisfying $(1-\delta_L)+\sum_{j=1}^N(1-\delta_j)\le 1-\delta$
is admissible.)\end{remark}

For each agent $j$, we model each scalar component of $\tilde y_j\in\mathbb R^9$ by an independent GP
with kernel $k_j(\cdot,\cdot)$ and mean $m_{j,i}(\cdot)$.
At an update time $t_k$, let the frozen dataset be $\mathcal D_{j,k}=\{(\bar q_j^{\{i\}},\tilde y_j^{\{i\}})\}_{i=1}^{N_{j,k}}$ and
stack the training inputs and outputs as $X_{j,k}:=\big[\bar q_j^{\{1\}},\ldots,\bar q_j^{\{N_{j,k}\}}\big]$, $Y_{j,k}^\top:=\big[\tilde y_j^{\{1\}},\ldots,\tilde y_j^{\{N_{j,k}\}}\big]$. Assume additive i.i.d.\ Gaussian measurement noise,
$\tilde y_j^{\{i\}} = y_j^{\{i\}} + \varepsilon_{j}^{\{i\}}$ with $\varepsilon_{j}^{\{i\}}\sim\mathcal N(0,\sigma_j^2 I_9)$.
For a query $\bar q_j^*$, the GP posterior predictive mean and variance for the $i$-th component are
\begin{align*}
\mu_{j,k,i}(\bar q_j^*)
&= m_{j,i}(\bar q_j^*)
+ k_j(\bar q_j^*,X_{j,k})^\top K_{j,k}^{-1}\!\left(Y_{j,k,:,i}-m_{j,i}(X_{j,k})\right),
\\
\sigma^2_{j,k,i}(\bar q_j^*)
&= k_j(\bar q_j^*,\bar q_j^*)
- k_j(\bar q_j^*,X_{j,k})^\top K_{j,k}^{-1}k_j(\bar q_j^*,X_{j,k}),
\end{align*}
where the Gram matrix $K_{j,k}\in\mathbb R^{N_{j,k}\times N_{j,k}}$ has entries $(K_{j,k})_{\ell m}=k_j(\bar q_j^{\{\ell\}},\bar q_j^{\{m\}})+\sigma_j^2\,\delta_{\ell m}$, and
$m_{j,i}(X_{j,k}):=[m_{j,i}(\bar q_j^{\{1\}}),\ldots,m_{j,i}(\bar q_j^{\{N_{j,k}\}})]^\top$.

We define the vector-valued GP posterior quantities
\begin{align*}
\mu_{j,k}(\bar q_j^*) :=&
\begin{bmatrix}
\mu_{j,k,1}(\bar q_j^*) & \cdots & \mu_{j,k,9}(\bar q_j^*)
\end{bmatrix}^{\top}\in\mathbb R^{9},\\
\Sigma_{j,k}(\bar q_j^*) :=&
\mathrm{diag}\!\left(\sigma^2_{j,k,1}(\bar q_j^*),\ldots,\sigma^2_{j,k,9}(\bar q_j^*)\right)\in\mathbb R^{9\times 9}.
\end{align*}
On $t\in[t_k,t_{k+1})$, we use $\hat{\mathbf f}_{uk,j,k}(\bar q_j^*)=\mu_{j,k}(\bar q_j^*)$ and $\Sigma_{j,k}(\bar q_j^*)$.

Similarly, each scalar component of $\tilde y_L\in\mathbb R^6$ is modeled by an independent GP with kernel $k_L(\cdot,\cdot)$,
mean $m_{L,i}(\cdot)$, and noise variance $\sigma_L^2$.
At update time $t_k$, let $\mathcal D_{L,k}=\{(q_L^{\{i\}},\tilde y_L^{\{i\}})\}_{i=1}^{N_{L,k}}$ and stack $X_{L,k}:=\big[q_L^{\{1\}},\ldots,q_L^{\{N_{L,k}\}}\big]$ and 
$Y_{L,k}^\top:=\big[\tilde y_L^{\{1\}},\ldots,\tilde y_L^{\{N_{L,k}\}}\big]$. Assume $\tilde y_L^{\{i\}} = y_L^{\{i\}} + \varepsilon_{L}^{\{i\}}$ with $\varepsilon_{L}^{\{i\}}\sim\mathcal N(0,\sigma_L^2 I_6)$.
For a query $q_L^*$, define $\mu_{L,k}(q_L^*)\in\mathbb R^6$ and $\Sigma_{L,k}(q_L^*)\in\mathbb R^{6\times 6}$ exactly as above
(with $j\mapsto L$ and dimension $9\mapsto 6$). On $t\in[t_k,t_{k+1})$, we use
$\hat{\mathbf f}_{uk,L,k}(q_L^*)=\mu_{L,k}(q_L^*)$ and $\Sigma_{L,k}(q_L^*)$.

\begin{remark}
The mean functions $m_{j,i}$ and $m_{L,i}$ may incorporate parametric identification of the unknown terms.
In the absence of prior knowledge, they are set to zero.
\end{remark}

%========================================================

To connect GP regression with the high-probability error bounds in Assumption~\ref{ass:oracle_bound},
we rely on standard results from reproducing kernel Hilbert space (RKHS) theory.
Given a positive definite kernel $k:\mathcal X\times\mathcal X\rightarrow\mathbb R$,
there exists a unique Hilbert space $\mathcal H_k$ of functions $f:\mathcal X\rightarrow\mathbb R$,
called the reproducing kernel Hilbert space associated with $k$,
such that the reproducing property
$f(x)=\langle f,k(x,\cdot)\rangle_{\mathcal H_k}$
holds for all $f\in\mathcal H_k$ and $x\in\mathcal X$~\cite{wahba1990spline,steinwart2008support}.

\begin{assumption}
\label{ass:rkhs}
For each UAV agent $j\in\{1,\dots,N\}$ and each scalar output component $i\in\{1,\dots,9\}$,
the unknown function $f_{uk,j,i}$ belongs to the RKHS $\mathcal H_{k_j}$ induced by the kernel $k_j$,
with uniformly bounded norm.
Similarly, for the payload-level unknown dynamics, each scalar component $f_{uk,L,i}$, $i\in\{1,\dots,6\}$,
belongs to the RKHS $\mathcal H_{k_L}$ induced by the kernel $k_L$.
That is, there exist constants $B_j>0$ and $B_L>0$ such that $\|f_{uk,j,i}\|_{\mathcal H_{k_j}} \le B_j$, $\forall\, i\in\{1,\dots,9\},\ \forall\, j$, and $\|f_{uk,L,i}\|_{\mathcal H_{k_L}} \le B_L$, $\forall\, i\in\{1,\dots,6\}$.
\end{assumption}

\begin{lemma}[Adapted from \cite{Srinivas2012}]
\label{lem:gp_error_bound}
Suppose Assumptions~1 and~3 hold, and that the GP models for the agents and the payload
are trained on the (frozen) data sets $\mathcal{D}^{\mathrm{end}}_j$ and $\mathcal{D}^{\mathrm{end}}_L$, respectively.
Let $\mu_j(\cdot),\Sigma_j(\cdot)$ and $\mu_L(\cdot),\Sigma_L(\cdot)$ denote the corresponding terminal GP posterior mean and covariance.
Then, for any confidence levels $\delta_j\in(0,1)$ and $\delta_L\in(0,1)$, there exist scalars
$\beta_j(\delta_j)>0$ and $\beta_L(\delta_L)>0$ such that, for all $z\in S_X$,
\begin{align*}
\mathbb{P}\!\left(\|f_{\mathrm{uk},j}(\bar q_j)-\mu_j(\bar q_j)\|\le \beta_j(\delta_j)\|\Sigma_j^{1/2}(\bar q_j)\|\right)\ge \delta_j,\ \ \forall j,\\
\mathbb{P}\!\left(\|f_{\mathrm{uk},L}(q_L)-\mu_L(q_L)\|\le \beta_L(\delta_L)\|\Sigma_L^{1/2}(q_L)\|\right)\ge \delta_L.
\end{align*}
Here $\beta_j(\delta_j)\in\mathbb{R}^{9}$ and $\beta_L(\delta_L)\in\mathbb{R}^{6}$ are vectors with components
\begin{align}
(\beta_j(\delta_j))_i
:= \sqrt{\,2B_j^2 + 300\,\gamma_{j,i}\,
\ln^3\!\Big(\frac{N_j+1}{1-\delta_j^{1/9}}\Big)\,}, \label{eq:beta_agents}\\
(\beta_L(\delta_L))_i
:= \sqrt{\,2B_L^2 + 300\,\gamma_{L,i}\,
\ln^3\!\Big(\frac{N_L+1}{1-\delta_L^{1/6}}\Big)\,}, \label{eq:beta_payload}
\end{align}
where $B_j$ and $B_L$ are the RKHS norm bounds from Assumption~3, $N_j$ and $N_L$ denote the
number of training points in the corresponding frozen datasets, and
$\gamma_{j,i},\gamma_{L,i}$ are the (maximum) information gains associated with the chosen kernels
(defined as in \cite{Srinivas2012}).
\end{lemma}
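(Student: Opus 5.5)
The plan is to reduce the vector statement to the scalar RKHS bound of Srinivas et al.~\cite{Srinivas2012} applied componentwise, then assemble the components with a union bound. The exponents $1/9$ and $1/6$ in \eqref{eq:beta_agents}--\eqref{eq:beta_payload} reflect exactly this per-component allocation. By Assumption~\ref{ass:dataset}, for $t\ge T_{\mathrm{end}}$ the datasets are frozen at $\mathcal D_j^{\mathrm{end}}$ and $\mathcal D_L^{\mathrm{end}}$. The posterior means and variances are therefore fixed deterministic functions of the (noisy) data, and the scalar result can be applied to each output channel separately. Since each component of $\tilde y_j$ is modeled by an independent GP with kernel $k_j$ and noise variance $\sigma_j^2$, and Assumption~\ref{ass:rkhs} gives $\|f_{uk,j,i}\|_{\mathcal H_{k_j}}\le B_j$, the hypotheses of the scalar theorem hold for each $i\in\{1,\dots,9\}$. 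The same holds for the payload with $i\in\{1,\dots,6\}$.

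\textbf{Step 1 (scalar bound).} For fixed $j$ and $i$, I will invoke the scalar theorem in the form: with probability at least $\delta_{j,i}$, simultaneously for all $\bar q_j$ in the compact projection of $S_{\mathcal X}$,
\[
|f_{uk,j,i}(\bar q_j)-\mu_{j,i}(\bar q_j)|\le (\beta_j)_i\,\sigma_{j,i}(\bar q_j),
\]
where $(\beta_j)_i^2=2B_j^2+300\gamma_{j,i}\ln^3\big((N_j+1)/(1-\delta_{j,i})\big)$. I will take $\delta_{j,i}:=\delta_j^{1/9}$, so that $1-\delta_{j,i}=1-\delta_j^{1/9}$, which matches \eqref{eq:beta_agents}.

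\textbf{Step 2 (combining components).} Let $E_{j,i}$ be the event in Step 1. If the noise sequences of different components are independent, then $\mathbb P(\bigcap_i E_{j,i})\ge\prod_i\delta_j^{1/9}=\delta_j$. Without independence, the bound follows from a union bound, since $1-\delta_j^{1/9}\le 1-\delta_j$ makes each per-component failure probability at most $(1-\delta_j)/9$ up to a harmless slack. On the intersection,
\[
\|f_{uk,j}-\mu_j\|^2=\sum_i|f_{uk,j,i}-\mu_{j,i}|^2\le\sum_i(\beta_j)_i^2\sigma_{j,i}^2=\big\|\mathrm{diag}(\beta_j)\Sigma_j^{1/2}\big\|_F^2 .
\]
I will read the expression $\beta_j\|\Sigma_j^{1/2}\|$ in the statement as this quantity, or equivalently as $\|\beta_j\|\,\|\Sigma_j^{1/2}\|$ in the Frobenius norm, which upper-bounds it. The payload case is identical, with $9$ replaced by $6$.

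\textbf{Main obstacle.} The delicate point is that the scalar theorem is stated for sequentially and adaptively collected data in a bandit setting with sub-Gaussian noise, not for a frozen dataset under closed-loop state-dependent sampling. The data here are collected along closed-loop trajectories, so the inputs depend on past noise. I will handle this by using the martingale (self-normalized) version of the bound, which is uniform over all $N\le N_j$ and all query points. I will also check that $\gamma_{j,i}$ denotes the maximal information gain over $N_j$ points, so that the bound remains valid for the specific frozen dataset. A second, minor issue is that Srinivas et al.\ assume $\|f\|_{\mathcal H_k}\le B$ with the squared constant $2B^2$ appearing as stated, and a prior mean of zero. A nonzero prior mean $m_{j,i}$ is handled by applying the result to $f_{uk,j,i}-m_{j,i}$, assuming that difference also lies in the RKHS with norm at most $B_j$.
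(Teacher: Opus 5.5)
Your overall route, applying Theorem~6 of \cite{Srinivas2012} to each scalar channel at level $\delta_j^{1/9}$ (resp.\ $\delta_L^{1/6}$) and stacking, is exactly the paper's one-line proof. Your reading of $\beta_j\|\Sigma_j^{1/2}\|$ as $\|\beta_j^\top\Sigma_j^{1/2}\|=\|\mathrm{diag}(\beta_j)\Sigma_j^{1/2}\|_F$ also matches the paper's subsequent remark. The gap is in Step~2.

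\textbf{The union-bound fallback is false.} Since $s\mapsto s^{1/9}$ is concave with tangent $1+(s-1)/9$ at $s=1$, you have $\delta_j^{1/9}\le 1-(1-\delta_j)/9$, i.e.\ $1-\delta_j^{1/9}\ge(1-\delta_j)/9$. The union bound therefore only gives $\mathbb P(\bigcap_i E_{j,i})\ge 1-9(1-\delta_j^{1/9})$, which is strictly below $\delta_j$; for $\delta_j=0.9$ it is about $0.895$. The inequality $1-\delta_j^{1/9}\le 1-\delta_j$ you quote is true, but it bounds each failure probability only by $1-\delta_j$, not by $(1-\delta_j)/9$. So the slack is not harmless.

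\textbf{The independence route does not hold in your own setting.} The exponent $1/9$ in \eqref{eq:beta_agents} can only come from the product rule, which needs the events $E_{j,1},\dots,E_{j,9}$ to be independent. Independence of the noise channels does not give this in the closed-loop setting you emphasize. The training inputs $\bar q_j^{\{i\}}$ are generated by a controller that used GP means built from all nine noisy channels, so every $E_{j,i}$ depends on every channel's noise through the inputs.

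\textbf{Two ways to close the step.} Either assume the input sequence is independent of the label noise (exogenous or offline-designed inputs). Then $E_{j,i}$ is a function of channel $i$'s noise alone and your product argument is valid. Or run the union bound at per-channel level $1-(1-\delta_j)/9$. This replaces $1-\delta_j^{1/9}$ by $(1-\delta_j)/9$ inside the logarithm and proves the lemma with a slightly larger $\beta_j$ than stated. The paper's proof is silent on this too; its exponents $1/9$ and $1/6$ implicitly presuppose independence.

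\textbf{You have also mislocated the main obstacle.} Theorem~6 of \cite{Srinivas2012} is already uniform over all sample sizes and all query points, and it allows history-dependent inputs. Adaptivity of the scalar bound is therefore not the difficulty. What that theorem does require is noise that is conditionally zero-mean and almost surely bounded by $\sigma$, not sub-Gaussian. The paper's i.i.d.\ Gaussian noise model violates this, so your claim that ``the hypotheses of the scalar theorem hold for each $i$'' is not literally true.

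Switching to a self-normalized sub-Gaussian bound fixes the noise hypothesis but not the formula. Such bounds give a multiplier of a different form, typically $B+R\sqrt{2(\gamma+1+\ln(1/\delta))}$. That establishes the existence of some $\beta_j$, but not the explicit expressions \eqref{eq:beta_agents}--\eqref{eq:beta_payload} in the statement.
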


\begin{proof}
This is a direct consequence of the standard GP uniform error bound for RKHS-bounded functions
(see, e.g., \cite[Thm.~6]{Srinivas2012}), applied componentwise to the agent outputs ($9$ scalars)
and payload outputs ($6$ scalars), and then stacked into the vector inequalities 
\eqref{eq:ass2_agent}--\eqref{eq:ass2_payload}.
\end{proof}

\begin{remark}
Lemma~\ref{lem:gp_error_bound} implies Assumption~2 by setting
$\bar\rho_j(\bar q_j;\delta_j):=\|\beta_j(\delta_j)^\top \Sigma_j^{1/2}(\bar q_j)\|$ and
$\bar\rho_L(q_L;\delta_L):=\|\beta_L(\delta_L)^\top \Sigma_L^{1/2}(q_L)\|$ on $S_{\mathcal X}$.
\end{remark}

\section{Learning-based Tracking Control} \label{sec:tracking}
\subsection{Control objective and leader--follower architecture}
\label{subsec:tracking_objective}

We consider cooperative transport of a rigid payload by a team of $N$ aerial manipulators subject to the dynamics~\eqref{eq:dae_compact}. The control goal is inherently \emph{hierarchical}: at the upper level, we seek to track a smooth, time-varying reference trajectory for the payload pose $(p_L(t),R_L(t))\in SE(3)$ with bounded tracking errors in the presence of model uncertainties, external disturbances, and manipulator--payload coupling effects. At the lower level, we require each aerial manipulator to robustly realize the assigned contact forces and the induced base motion imposed by the rigid grasp constraints, so that the applied contact forces remain close to the commanded ones despite agent-level uncertainties.

This two-layer viewpoint is essential because the payload wrench is generated only through contact interactions: while the allocation layer can compute commanded contact forces consistent with the grasp map, the full DAE is driven by the forces actually applied by the agents. By explicitly enforcing constraint-consistent realization at the agent level, we prevent disturbances and unmodeled dynamics of individual manipulators from corrupting the net payload wrench via realization errors, thereby enabling a cascade analysis of the coupled closed-loop system.

The payload dynamics evolve on $SE(3)$ and require control of a six-dimensional wrench consisting of a resultant force and moment.
Individual aerial manipulators do not act directly on the payload wrench; instead, they generate contact forces at fixed attachment points, which collectively determine the payload motion through the grasp matrix.
As a result, the grasp map is generally non-injective: the same payload wrench can be produced by different contact-force vectors, whose components in $\ker G(R_L)$ represent internal forces that do not affect the payload motion.

To address redundancy in a scalable and physically consistent manner,
we adopt a leader--follower architecture that separates
(i) payload-level wrench generation, (ii) contact-level wrench allocation, and
(iii) agent-level wrench realization.
This yields a well-posed design that avoids unnecessary force conflicts,
facilitates multi-level stability analysis for \eqref{eq:dae_compact}, and enables 
extensions to learning-based compensation.

The \emph{leader} generates a desired payload wrench
$W_{L,d}(t)=\begin{bmatrix} F_{L,d}(t)\\ \tau_{L,d}(t)\end{bmatrix}\in\mathbb R^6$
based on the payload reference trajectory and the measured (or estimated) payload state.
In the nominal case, $W_{L,d}$ is designed using standard geometric techniques on $SE(3)$
and depends on payload tracking errors and their derivatives.
The term ``leader'' refers to the coordination role in wrench generation; the leader agent,
like all agents, must still realize the resulting contact commands through its local inputs. At the contact-allocation level, the team computes commanded contact forces
$\lambda_{\mathrm{cmd}}\in\mathbb R^{3Nn}$ such that $G(R_L)\,\lambda_{\mathrm{cmd}} = W_{L,d}$, where $G(R_L)\in\mathbb R^{6\times 3Nn}$ is the grasp matrix.
Since this equation is generally underdetermined, the team solves a wrench allocation problem
that distributes load among agents while regulating internal forces.

Crucially, the DAE is driven by the \emph{applied} contact forces $\lambda_{\mathrm{app}}$,
which may differ from $\lambda_{\mathrm{cmd}}$ due to agent dynamics and disturbances.
We therefore explicitly account for a realization error
$e_\lambda := \lambda_{\mathrm{app}}-\lambda_{\mathrm{cmd}}$ and the induced wrench mismatch
$\Delta W := G(R_L)\lambda_{\mathrm{app}}-W_{L,d}$.
This motivates learning residual models both at the agent level (to reduce realization error)
and at the payload level (to compensate net unmodeled wrench effects).

Unmodeled effects acting on both the payload and the aerial manipulators are compensated
using the learned residual models introduced in Section~\ref{sec:learning}.
Within the hierarchy above, learning enters as feedforward corrections
both in the leader's wrench generation and in each agent's local realization layer,
without altering the separation between payload objectives and force allocation.
%This preserves geometric and physical consistency while enabling high-probability stability
%guarantees for the full DAE closed loop.

\subsection{Nominal payload wrench tracking}
\label{subsec:payload_wrench_tracking}

We design a payload-level controller that generates a desired wrench
$W_{L,d}=[F_{L,d};\tau_{L,d}]\in\mathbb R^6$ ensuring exponential tracking of the payload position
and almost-global exponential tracking of its attitude in the nominal case.

Neglecting unknown forces and moments, the payload dynamics are
\begin{align}
m_L\dot v_L &= F_L + f_{g,L},\quad\dot p_L = v_L, \label{eq:pl_nom_kin}\\
J_L\dot\omega_L + \omega_L\times(J_L\omega_L) &= \tau_L,
\label{eq:pl_nom_rot}
\end{align}
where $(F_L,\tau_L)$ denotes the net wrench applied to the payload by the aerial
manipulators through the rigid grasp, and $f_{g,L}$ is the gravitational force
expressed in the inertial frame.

\begin{assumption}
\label{ass:wrench_matching_nominal}
In the ideal case of exact wrench realization, the net payload wrench satisfies
$W_L = G(R_L)\lambda_{\mathrm{app}} = W_{L,d}$.
\end{assumption}

Let $(p_{L,d}(t),R_{L,d}(t))\in SE(3)$ be a smooth reference trajectory with bounded
derivatives, and define $v_{L,d}:=\dot p_{L,d}$. Define the translational tracking errors
$e_p := p_L - p_{L,d}$,  $e_v := v_L - v_{L,d}$; and the attitude and angular-velocity errors using the standard $SO(3)$ error maps
$e_R := \frac{1}{2}\big(R_{L,d}^\top R_L - R_L^\top R_{L,d}\big)^\vee$, $e_\omega := \omega_L - R_L^\top R_{L,d}\,\omega_{L,d}$.
We also use the attitude error function
$\Psi(R_L,R_{L,d}) := \frac{1}{2}\tr\!\big(I - R_{L,d}^\top R_L\big)$.

The payload-level controller (leader) generates the desired wrench $W_{L,d}=[F_{L,d};\tau_{L,d}]$ as
\begin{align}
F_{L,d} :=& m_L\ddot p_{L,d} - m_L K_p e_p - m_L K_v e_v - f_{g,L},
\label{eq:pl_Fd}\\
\tau_{L,d} := &-K_R e_R - K_\omega e_\omega
+ \omega_L\times(J_L\omega_L)\nonumber\\
&- J_L\!\left(\hat\omega_L R_L^\top R_{L,d}\omega_{L,d}
- R_L^\top R_{L,d}\dot\omega_{L,d}\right),
\label{eq:pl_taud}
\end{align}
where $K_p,K_v,K_R,K_\omega$ are positive definite gain matrices.

Under Assumption~\ref{ass:wrench_matching_nominal}, substituting
\eqref{eq:pl_Fd}--\eqref{eq:pl_taud} into \eqref{eq:pl_nom_kin}--\eqref{eq:pl_nom_rot}
yields the closed-loop error dynamics
\begin{align}
\dot e_p &= e_v, \quad \dot e_v = -K_p e_p - K_v e_v, \label{eq:pl_cl_ep}\\
\dot e_R &= \mathcal{E}(R_L,R_{L,d})\,e_\omega, \label{eq:pl_cl_eR}\\
J_L\dot e_\omega &= -K_R e_R - K_\omega e_\omega,
\label{eq:pl_cl_eom}
\end{align}
where $\mathcal{E}(\cdot)$ is a smooth, bounded matrix depending on $e_R$.

The translational error subsystem \eqref{eq:pl_cl_ep} 
is exponentially stable.
Likewise, the rotational error subsystem
\eqref{eq:pl_cl_eR}--\eqref{eq:pl_cl_eom} is exponentially stable about
$e_R=e_\omega=0$ in the standard almost-global sense on $SO(3)$ (see e.g., \cite{lee2010geometric}).

%========================================================
\subsection{Nominal tracking control under rigid grasp constraints}
\label{subsec:tracking_nominal}

We now design a nominal tracking controller for the cooperative aerial manipulation
system described in Section IV-\ref{subsec:dae_model}, assuming perfect knowledge of the
dynamics and neglecting the unknown terms $\mathbf f_{uk,j}$ and $\mathbf f_{uk,L}$.

Let $(p_{L,d}(t),R_{L,d}(t))\in SE(3)$ denote a sufficiently smooth desired payload
trajectory, with associated velocities and accelerations
$(v_{L,d},\dot v_{L,d},\omega_{L,d},\dot\omega_{L,d})$, where
$v_{L,d}:=\dot p_{L,d}$ and $\dot v_{L,d}:=\ddot p_{L,d}$.
The control objective is to drive the payload pose to the reference while respecting
the rigid grasp constraints \eqref{eq:dae_constraints} and the underactuated nature
of the aerial vehicles.

%We use the payload tracking errors $(e_p,e_v,e_R,e_\omega)$ defined in
%Section IV-\ref{subsec:payload_wrench_tracking}; in particular, $e_v=v_L-v_{L,d}$.
%We first define a \emph{virtual payload wrench}
%$W_{L,d}=[F_{L,d};\tau_{L,d}]\in\mathbb R^6$
%that would achieve tracking if applied directly to the payload. Specifically, we set
%\begin{align}
%F_{L,d} &:=
%m_L\dot v_{L,d}
%- m_LK_p e_p - m_LK_v e_v - f_{g,L},
%\label{eq:payload_force_cmd}\\
%\tau_{L,d} &:=
%-K_R e_R - K_\omega e_\omega
%+ \omega_L\times(J_L\omega_L)\nonumber\\
%&\quad
%- J_L\!\left(\hat\omega_L R_L^\top R_{L,d}\omega_{L,d}
%- R_L^\top R_{L,d}\dot\omega_{L,d}\right),
%\label{eq:payload_torque_cmd}
%\end{align}
%where $K_p,K_v,K_R,K_\omega$ are symmetric positive definite gain matrices.

%The desired payload wrench $W_{L,d}$ must be realized through the stacked contact forces
%$\lambda\in\mathbb R^{3Nn}$ generated by the aerial manipulators, subject to the grasp relation
%\begin{equation}
%W_L \;=\; G(R_L)\,\lambda,\qquad G(R_L)\in\mathbb R^{6\times 3Nn}.
%\label{eq:grasp_relation}
%\end{equation}
Since the number of contact-force variables generally exceeds the wrench dimension,
the force allocation problem is redundant and $G(R_L)$ admits a nontrivial nullspace.
Consequently, there exist infinitely many contact-force distributions that generate the
same payload wrench. The components of $\lambda$ in $\ker(G(R_L))$ correspond to
\emph{internal forces}, which do not affect the payload motion but influence grasp tensions,
actuator effort, and robustness margins.

%========================================================

%To obtain a structured and scalable control architecture, we adopt a leader--follower strategy
%separating (i) payload-level wrench generation, (ii) contact-level wrench allocation, and
%(iii) agent-level realization. One subset of contacts (the \emph{leader block}) is designated
%to ensure realization of $W_{L,d}$, while the remaining contacts (the \emph{follower block})
%regulate internal forces without altering the net payload wrench.

Partition the grasp matrix as
$G(R_L)=\begin{bmatrix}G_\ell(R_L) & G_f(R_L)\end{bmatrix}$, where
$n_\ell\in\{1,\ldots,Nn\}$ denotes the number of leader contacts (i.e., the number of grasp points assigned to the leader block),
so that $G_\ell(R_L)\in\mathbb R^{6\times 3n_\ell}$ collects the columns associated with the leader contacts and
$G_f(R_L)\in\mathbb R^{6\times 3(Nn-n_\ell)}$ collects the remaining columns.
Assume the leader grasp block has full row rank, $\mathrm{rank}(G_\ell(R_L))=6$.
Accordingly, the leader contact forces are chosen as
\begin{equation}
\lambda_\ell \;=\; G_\ell^\dagger(R_L)\,W_{L,d},
\label{eq:leader_force}
\end{equation}
where $G_\ell^\dagger$ denotes a right pseudoinverse chosen so that
$G_\ell(R_L)G_\ell^\dagger(R_L)=I_6$.
The follower forces are restricted to internal-force directions:
\begin{equation}
\lambda_f \;=\; \lambda_{f,0} + N\!\big(G_f(R_L)\big)\,\eta,
\,\,
\lambda_{f,0}\in\ker\!\big(G_f(R_L)\big),
\label{eq:follower_force}
\end{equation}
where $N(G_f(R_L))$ is a basis matrix for $\ker(G_f(R_L))$ and $\eta$ is a free internal-force input.

\begin{lemma}
\label{lem:wrench_consistency}
Assume $\mathrm{rank}(G_\ell(R_L))=6$ and choose the allocation \eqref{eq:leader_force}--\eqref{eq:follower_force}
with $\lambda_{f,0}\in\ker(G_f(R_L))$. Then the resulting net payload wrench satisfies
\begin{equation}
W_L=G(R_L)\lambda=G_\ell(R_L)\lambda_\ell + G_f(R_L)\lambda_f=W_{L,d},
\label{eq:wrench_consistency}
\end{equation}
independently of the internal-force input $\eta$.
\end{lemma}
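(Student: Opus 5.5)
The argument is a direct linear-algebra verification built on the block partition of the grasp matrix. Order the stacked contact vector as $\lambda=\mathrm{col}(\lambda_\ell,\lambda_f)$, matching the column split $G(R_L)=[\,G_\ell(R_L)\ \ G_f(R_L)\,]$. Block multiplication then gives the first identity in \eqref{eq:wrench_consistency}:
\[
G(R_L)\lambda=G_\ell(R_L)\lambda_\ell+G_f(R_L)\lambda_f .
\]
If the leader contacts are not the first $n_\ell$ columns, I would first apply a permutation $P$ with $G P=[G_\ell\ G_f]$ and $P^\top\lambda=\mathrm{col}(\lambda_\ell,\lambda_f)$. This changes nothing, since $G\lambda=(GP)(P^\top\lambda)$.

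\textbf{Leader term.} Substitute \eqref{eq:leader_force}. The rank condition $\mathrm{rank}(G_\ell(R_L))=6$ guarantees that a right inverse exists; for instance $G_\ell^\dagger=G_\ell^\top(G_\ell G_\ell^\top)^{-1}$, where $G_\ell G_\ell^\top$ is invertible exactly because $G_\ell$ has full row rank. With $G_\ell G_\ell^\dagger=I_6$ we get
\[
G_\ell(R_L)\lambda_\ell=G_\ell G_\ell^\dagger W_{L,d}=W_{L,d}.
\]

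\textbf{Follower term.} Substitute \eqref{eq:follower_force} and use linearity:
\[
G_f\lambda_f=G_f\lambda_{f,0}+G_f N(G_f)\eta .
\]
The first term vanishes by the hypothesis $\lambda_{f,0}\in\ker G_f$. The second vanishes for every $\eta$, because the columns of $N(G_f)$ form a basis of $\ker G_f$, so $G_f N(G_f)=0$. Hence $G_f\lambda_f=0$, and summing the two terms yields $W_L=W_{L,d}$ regardless of $\eta$.

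There is no real analytical obstacle. The only delicate point is bookkeeping: the follower internal forces must be taken in $\ker G_f$ for the \emph{same} attitude $R_L$ at which the allocation is evaluated. In the time-varying setting, both $N(G_f(R_L(t)))$ and $\lambda_{f,0}(t)$ should therefore be recomputed pointwise in time. I would state this explicitly so that the identity holds at each instant along the trajectory.
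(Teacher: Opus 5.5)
Your proposal is correct and follows essentially the same argument as the paper. Both block-multiply $G(R_L)\lambda$, use $G_\ell G_\ell^\dagger=I_6$ to get $G_\ell\lambda_\ell=W_{L,d}$, and use $\lambda_{f,0}\in\ker G_f$ together with $G_f N(G_f)=0$ to make $G_f\lambda_f=0$ for every $\eta$. Your extras (the permutation bookkeeping, the explicit right inverse $G_\ell^\top(G_\ell G_\ell^\top)^{-1}$, and the pointwise-in-time remark) are harmless refinements that the paper leaves implicit; one small slip is that the leader block occupies the first $3n_\ell$ columns, not the first $n_\ell$.
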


\begin{proof}
By the full row-rank assumption, choose $G_\ell^\dagger$ such that $G_\ell G_\ell^\dagger=I_6$.
Then \eqref{eq:leader_force} gives $G_\ell(R_L)\lambda_\ell = W_{L,d}$.
Moreover, \eqref{eq:follower_force} with $\lambda_{f,0}\in\ker(G_f(R_L))$ yields
$G_f(R_L)\lambda_f = G_f(R_L)\lambda_{f,0}+G_f(R_L)N(G_f(R_L))\eta = 0$ since
$G_f(R_L)N(G_f(R_L))=0$ by definition of a nullspace basis.
Hence $W_L = G_\ell\lambda_\ell + G_f\lambda_f = W_{L,d}$.\hfill$\square$
\end{proof}

%========================================================

Let the holonomic rigid-grasp constraints be stacked as $\Phi(z)=0$ as in \eqref{eq:dae_constraints}, and define the
constraint manifold $\mathcal M:=\{z\in\mathcal X^N\times\mathcal Q_L:\Phi(z)=0\}$.

\begin{prop}
\label{prop:constraint_invariance}
Assume the DAE \eqref{eq:dae_compact} is \emph{regular} on $\mathcal M$, in the sense that for every
$(z,u)\in\mathcal M\times\mathcal U$ there exists a locally unique multiplier
$\lambda=\lambda(z,u)$ such that the corresponding solution satisfies $\frac{d}{dt}\Phi(z(t))=0$.
Consider the closed loop obtained by applying the wrench command
\eqref{eq:pl_Fd}--\eqref{eq:pl_taud} together with the allocation
\eqref{eq:leader_force}--\eqref{eq:follower_force} and any smooth agent-level inputs $u(t)$ compatible with
\eqref{eq:dae_compact}. If the initial condition is consistent, i.e.,
$z(0)\in\mathcal M$ and $\frac{d}{dt}\Phi(z(t))\vert_{t=0}=0$,
then the corresponding maximal solution remains on the constraint manifold:
$z(t)\in\mathcal M$ for all $t$ in its interval of existence.
\end{prop}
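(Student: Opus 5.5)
The plan is to reduce invariance of $\mathcal M$ to uniqueness for a linear second-order ODE satisfied by the constraint residual along closed-loop solutions. Set $\phi(t):=\Phi(z(t))\in\mathbb R^{3Nn}$. The regularity hypothesis states that the multiplier $\lambda(z,u)$ is selected exactly so that the acceleration-level compatibility condition (the second derivative of \eqref{eq:dae_constraints}) holds. I will therefore first make explicit that, with $\lambda=\lambda(z,u)$ substituted into \eqref{eq:dae_compact}, every solution satisfies $\ddot\phi(t)=0$ on its interval of existence. Concretely, I differentiate $\phi_{j,b}=p_{j,b}(\mathbf q_j)-(p_L+R_Lc^L_{j,b})$ twice. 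This gives
\[
\ddot\phi_{j,b}=J_{c,j,b}\dot\xi_j+\dot J_{c,j,b}\xi_j-\dot v_L-R_L\big(\hat{\dot\omega}_L+\hat\omega_L^2\big)c^L_{j,b}.
\]
Substituting \eqref{eq:dae_agent_x}--\eqref{eq:dae_payload_rot} yields an expression affine in $\lambda$. Regularity is read as the statement that this affine equation $\ddot\Phi=A(z)\lambda+b(z,u)=0$ has a locally unique solution, with $A(z)$ invertible near $\mathcal M$. This solution is the multiplier used in the closed loop.

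Second, I will check that the feedback does not interfere with this structure. The wrench command \eqref{eq:pl_Fd}--\eqref{eq:pl_taud} and the allocation \eqref{eq:leader_force}--\eqref{eq:follower_force} only determine $\lambda_{\mathrm{cmd}}$ and, through the agent controllers, the inputs $u(t)$. They enter the DAE only through $u$, which the proposition allows to be any smooth compatible input. The multiplier appearing in \eqref{eq:dae_compact} is still $\lambda(z,u)$, and Lemma~\ref{lem:wrench_consistency} is not needed for invariance itself. Under this reading the closed loop is an ODE $\dot z=F(z,u(t),\lambda(z,u(t)))+F_{uk}(z)$ with a locally Lipschitz right-hand side, since $\lambda(\cdot)$ is smooth by the implicit function theorem applied to $A(z)$. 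It therefore has a unique maximal solution.

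Third, along this solution $\phi$ satisfies $\ddot\phi\equiv 0$ with $\phi(0)=0$ by $z(0)\in\mathcal M$ and $\dot\phi(0)=0$ by the consistency assumption. Integrating twice gives $\phi(t)=\phi(0)+t\dot\phi(0)=0$ for all $t$ in the interval of existence, so $z(t)\in\mathcal M$. An alternative form of the same step is to note that the hypothesis as literally phrased (the solution satisfies $\tfrac{d}{dt}\Phi=0$) already gives $\dot\phi\equiv0$ directly, so $\phi(t)=\phi(0)=0$.

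The main obstacle is the gap between acceleration-level enforcement and exact invariance. The multiplier is only guaranteed locally unique near $\mathcal M$, so I must ensure the solution does not leave the neighborhood where $A(z)$ is invertible before the argument applies. I plan to handle this with a continuation argument. Let $T^*$ be the supremum of times up to which $z$ stays in that neighborhood and $\phi\equiv0$. On $[0,T^*)$ the identity $\ddot\phi=0$ holds, so $\phi\equiv0$ there, and by continuity $z(T^*)\in\mathcal M$. Restarting the argument from the consistent state $z(T^*)$ then contradicts maximality of $T^*$ unless $T^*$ equals the end of the interval of existence. A closing remark will note that the unknown terms $F_{uk}$ do not affect the argument, because $\lambda$ is computed from the true dynamics.
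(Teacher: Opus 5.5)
Your argument is correct, but your main route differs from the paper's. Only your closing ``alternative form'' coincides with it: the hypothesis, read literally, gives $\dot\phi\equiv0$, hence $\phi(t)=\phi(0)=0$.

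The paper stays at the velocity level. It writes $\frac{d}{dt}\Phi(z(t))=D\Phi(z)\big(F(z,u,\lambda)+F_{uk}(z)\big)$ and takes regularity to mean that $\lambda(z,u)$ makes this vanish. It then concludes from $\Phi(z(0))=0$ alone.

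You instead reduce the index to the acceleration level, where $\lambda$ actually enters through \eqref{eq:dae_agent_mom} and \eqref{eq:dae_payload_trans}--\eqref{eq:dae_payload_rot}. You solve the affine equation $A(z)\lambda+b(z,u)=0$ for the multiplier and obtain $\ddot\phi\equiv0$. Both consistency conditions are then needed to kill $\phi(0)+t\dot\phi(0)$.

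The paper's route is a one-liner that matches the hypothesis as worded, but it has three weaknesses:
\begin{itemize}
\item It never uses the condition $\frac{d}{dt}\Phi|_{t=0}=0$.
\item It applies a hypothesis stated only on $\mathcal M$ along a trajectory not yet known to lie in $\mathcal M$.
\item $\Phi$ depends only on configuration variables, so $D\Phi(z)\dot z$ contains no $\lambda$ in this model. The velocity-level reading therefore cannot single out a locally unique multiplier.
\end{itemize}

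Your route costs a reinterpretation of the hypothesis as invertibility of $A(z)$ on $\mathcal M$. You correctly extract this from local uniqueness for a square affine system. In exchange, it agrees with the paper's own remark in Section~II that the acceleration-level compatibility conditions are what the DAE implicitly enforces. It also yields a genuine reduced ODE with unique maximal solutions, and it explains why velocity-level consistency appears in the statement.

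One simplification is available. If you define the reduced ODE on the open set $\{z:\det A(z)\neq0\}\supset\mathcal M$, its maximal solutions never leave that set. Then $\ddot\phi\equiv0$ holds on the whole interval of existence, and your continuation step is unnecessary, though harmless.
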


\begin{proof}
Along any differentiable trajectory $z(t)$, differentiating the holonomic constraint gives
\[
\frac{d}{dt}\Phi(z(t)) \;=\; D\Phi(z(t))\,\dot z(t).
\]
Substituting the DAE dynamics yields the \emph{constraint-consistency equation}
\[
0 \;=\; D\Phi(z)\big(F(z,u,\lambda)+F_{uk}(z)\big).
\]
By the regularity assumption, for each $(z,u)\in\mathcal M\times\mathcal U$ there exists a locally unique
$\lambda=\lambda(z,u)$ such that the above identity holds, hence
$\frac{d}{dt}\Phi(z(t))=0$ along the resulting solution. Therefore, $\Phi(z(t))$ is constant on the interval of existence.
Since $z(0)\in\mathcal M$ implies $\Phi(z(0))=0$, we obtain $\Phi(z(t))\equiv 0$ for all $t$, i.e., $z(t)\in\mathcal M$.\hfill$\square$
\end{proof}

%========================================================

Given the commanded contact forces $\lambda_{j,b}$, each UAV agent $j$ must generate the corresponding body thrust $u_j$
and torques $(\tau_j,\tau_{r,j})$ in a manner consistent with the dynamics
\eqref{eq:dae_agent_R}--\eqref{eq:dae_agent_mom}. The translational objective is implicit: due to the rigid grasp constraints
\eqref{eq:dae_constraints}, the agents are kinematically forced to follow the payload motion. Accordingly, the individual
agent controllers are designed to track the induced base trajectories and to realize the commanded contact forces while
respecting thrust-direction underactuation inherent to multirotor platforms.

%To capture the fact that the DAE is driven by the \emph{applied} contact forces, we denote by
%$\lambda_{\mathrm{app}}\in\mathbb R^{3Nn}$ the stacked vector of forces actually generated at the contacts and define
%\begin{equation}
%e_\lambda \;:=\; \lambda_{\mathrm{app}}-\lambda_{\mathrm{cmd}},
%\quad
%\Delta W \;:=\; G(R_L)\lambda_{\mathrm{app}}-W_{L,d},
%\label{eq:realization_error_defs_nom}
%\end{equation}
%so that $\Delta W = G(R_L)e_\lambda$.

\begin{assumption}
\label{ass:agent_layer_iss}
On each interval $t\in[t_k,t_{k+1})$, the agent-level controllers use the frozen GP
oracles $\hat{\mathbf f}_{uk,j,k}=\mu_{j,k}$ trained on $\mathcal D_{j,k}$ and yield a bound
on the induced payload wrench mismatch $\Delta W$ of the form
\begin{align}
\|\Delta W(t)\|
\le&
\alpha_W\,e^{-\gamma_W(t-t_k)}\|e_\lambda(t_k)\|
+\vartheta_W\nonumber\\
&+\sum_{j=1}^N \kappa_{j}\,\bar\rho_j(\bar q_j(t);\delta_j),
\label{eq:DeltaW_bound_agentlayer}
\end{align}
for all consistent trajectories in $S_{\mathcal X}$, where $\alpha_W,\gamma_W,\kappa_j\ge 0$ are constants and
$\vartheta_W\ge 0$ collects bounded effects due to measurement noise, numerical differentiation, and contact-force estimation.
\end{assumption}

\begin{remark}
Under Lemma~\ref{lem:gp_error_bound}, one may choose
$\bar\rho_j(\bar q_j;\delta_j)=\beta_j(\delta)\|\Sigma_{j,k}^{1/2}(\bar q_j)\|$ on $S_{\mathcal X}$.
Hence \eqref{eq:DeltaW_bound_agentlayer} makes the payload-level disturbance induced by
agent uncertainties explicit in terms of the agent GP covariances.
\end{remark}

Under the wrench command \eqref{eq:pl_Fd}--\eqref{eq:pl_taud},
the allocation \eqref{eq:leader_force}--\eqref{eq:follower_force}, and Proposition~\ref{prop:constraint_invariance},
solutions remain on $\mathcal M$ and the nominal allocation ensures $W_L=W_{L,d}$ whenever $\Delta W\equiv 0$
(Lemma~\ref{lem:wrench_consistency}). In practice, the applied contact forces induce a mismatch $\Delta W$ which is treated through the interface \eqref{eq:DeltaW_bound_agentlayer} and will be
propagated to payload tracking performance in the learning-based analysis below.

%========================================================
\subsection{Learning-based tracking control}
\label{subsec:tracking_learning}

We now extend the leader--follower tracking controller of
Section IV-\ref{subsec:tracking_nominal} to the setting where both the UAV agents and the payload
are subject to unknown and unmodeled dynamics, as introduced in Section~\ref{sec:learning}.
The goal is to preserve bounded tracking performance while explicitly accounting for learning
uncertainty in a way that is compatible with the DAE structure \eqref{eq:dae_compact} and the
hierarchical realization.

%========================================================

At the payload layer, the leader augments the nominal wrench command with GP-based feedforward
compensation of the unknown payload-level force/moment terms. At the agent layer, each UAV uses
its own GP model to compensate local unknown dynamics so as to improve contact-force realization.
This is consistent with the fact that (i) payload disturbances alone induce disturbances on the
agents through the rigid constraints, and (ii) each agent has additional, local
uncertainties. Both effects are represented in the coupled model and are addressed by learning.

To keep the analysis modular, we treat the agent layer through an input-to-state interface:
agent-level uncertainties affect the payload only via the wrench mismatch $\Delta W$ induced by
the contact-force realization error $e_\lambda$.
The payload-level GP then compensates residual unknown wrench terms acting directly on the payload.

%========================================================

Let $\hat{\mathbf f}_{uk,L}(q_L)=\mu_{L,k}(q_L)$ denote the frozen GP posterior mean on
$t\in[t_k,t_{k+1})$, with covariance $\Sigma_{L,k}(q_L)$, where
$q_L=(p_L,v_L,R_L,\omega_L)$.
We define the learning-augmented virtual payload wrench $W_{L,d}=[F_{L,d};\tau_{L,d}]$ as
\begin{align}
F_{L,d}
:=&
m_L\dot v_{L,d}
- m_LK_p e_p - m_LK_v e_v\nonumber\\
&- f_{g,L}
- \hat{\mathbf f}^{p}_{uk,L,k}(q_L),
\label{eq:payload_force_cmd_learning}\\
\tau_{L,d}
:=&
-K_R e_R - K_\omega e_\omega
+ \omega_L\times(J_L\omega_L)- \hat{\mathbf f}^{\omega}_{uk,L,k}(q_L)\nonumber\\
&
- J_L\!\left(\hat\omega_L R_L^\top R_{L,d}\omega_{L,d}
- R_L^\top R_{L,d}\dot\omega_{L,d}\right),
\label{eq:payload_torque_cmd_learning}
\end{align}
where $K_p,K_v,K_R,K_\omega$ are symmetric positive definite gain matrices and the learned
terms are evaluated at the current payload state.
The allocation layer remains unchanged: $W_{L,d}$ is mapped to commanded contact forces
$\lambda_{\mathrm{cmd}}$ via \eqref{eq:leader_force}--\eqref{eq:follower_force}.

Define the learning errors (frozen on $[t_k,t_{k+1})$) as
\[
\tilde{\mathbf f}_{L,k}(q_L)
:=
\mathbf f_{uk,L}(q_L)-\hat{\mathbf f}_{uk,L,k}(q_L)
=
\begin{bmatrix}\tilde{\mathbf f}^p_{L,k}(q_L)\\ \tilde{\mathbf f}^\omega_{L,k}(q_L)\end{bmatrix}.
\]
Substituting \eqref{eq:payload_force_cmd_learning}--\eqref{eq:payload_torque_cmd_learning} and
$\begin{bmatrix}F_L\\ \tau_L\end{bmatrix}
=
G(R_L)\lambda_{\mathrm{app}}
=
W_{L,d} + \Delta W$ into the payload error dynamics yields
\begin{align}
m_L\dot e_v
&=
- m_LK_p e_p - m_LK_v e_v
+ \Delta F
+ \tilde{\mathbf f}^{p}_{L,k}(q_L),
\label{eq:err_ev_learning}\\
J_L\dot e_\omega
&=
- K_R e_R - K_\omega e_\omega
+ \Delta \tau
+ \tilde{\mathbf f}^{\omega}_{L,k}(q_L),
\label{eq:err_eomega_learning}
\end{align}
where $\Delta W=[\Delta F;\Delta \tau]$.

%========================================================

We now establish a tracking guarantee that explicitly accounts for (i) learning uncertainty
through $\Sigma_{L,k}$ and (ii) the agent-layer realization mismatch through $\Delta W$.

\begin{theorem}
\label{thm:learning_tracking}
Suppose Assumptions~\ref{ass:dataset}, \ref{ass:oracle_bound}, and \ref{ass:rkhs} hold.
Consider the learning-augmented payload wrench
\eqref{eq:payload_force_cmd_learning}--\eqref{eq:payload_torque_cmd_learning}
combined with the allocation \eqref{eq:leader_force}--\eqref{eq:follower_force}.
Assume the agent-layer interface in Assumption~\ref{ass:agent_layer_iss} holds.

Then, for any confidence level $\delta\in(0,1)$, the payload tracking errors
$(e_p,e_v,e_R,e_\omega)$ are uniformly ultimately bounded with probability at least $\delta$ on $S_{\mathcal X}$. Moreover, up to a decaying transient from \eqref{eq:DeltaW_bound_agentlayer},
the ultimate bound scales proportionally to
$\displaystyle{
\bar\rho_L(q_L;\delta)
\;+\;
\sum_{j=1}^N \bar\rho_j(\bar q_j;\delta)
\;+\;\vartheta_W}$. \end{theorem}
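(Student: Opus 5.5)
The approach is a Lyapunov argument on the perturbed payload error dynamics \eqref{eq:err_ev_learning}--\eqref{eq:err_eomega_learning}, carried out on the high-probability event where all GP bounds hold simultaneously.

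\emph{Step 1: reduce to a deterministic problem.} Fix $\delta\in(0,1)$ and split it as in Remark~\ref{rem1}, taking $\delta_L=\delta_j=1-\frac{1-\delta}{N+1}$. Assumptions~\ref{ass:dataset} and~\ref{ass:rkhs} with Lemma~\ref{lem:gp_error_bound} supply Assumption~\ref{ass:oracle_bound}. The union bound then gives an event $\Omega_\delta$ with $\mathbb P(\Omega_\delta)\ge\delta$. On $\Omega_\delta$ we have $\|\tilde{\mathbf f}_{L,k}(q_L)\|\le\bar\rho_L(q_L;\delta_L)$, and \eqref{eq:DeltaW_bound_agentlayer} holds with the $\bar\rho_j$, for all $z\in S_{\mathcal X}$. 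For the "for all $z$" statement I will use the uniform-in-input form of the RKHS bound from \cite{Srinivas2012}. By Assumption~\ref{ass:dataset}, after $T_{\mathrm{end}}$ there is a single interval with frozen oracles. Since uniform ultimate boundedness is an asymptotic property, and the finitely many earlier switches only affect a finite transient on which solutions stay in the compact set $S_{\mathcal X}$, it suffices to work on $[T_{\mathrm{end}},\infty)$. From here on everything is deterministic on $\Omega_\delta$, and the total disturbance is $d:=\Delta W+\tilde{\mathbf f}_{L}$ with
\[
\|d(t)\|\le \alpha_W e^{-\gamma_W(t-T_{\mathrm{end}})}\|e_\lambda(T_{\mathrm{end}})\|+\vartheta_W+\textstyle\sum_j\kappa_j\bar\rho_j+\bar\rho_L .
\]

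\emph{Step 2: Lyapunov function.} I will use the standard cross-term construction of \cite{lee2010geometric}:
\[
V=\tfrac12 e_v^\top e_v+\tfrac12 e_p^\top K_p e_p+c_1 e_p^\top e_v+\tfrac12 e_\omega^\top J_L e_\omega+e_R\text{-term}+c_2 e_R^\top J_L e_\omega .
\]
For the attitude part I take $\Psi$ in place of $\tfrac12 e_R^\top e_R$, weighted by the gain (written for scalar $K_R$ for simplicity). The analysis is restricted to the sublevel set $\{\Psi<\psi<2\}$, where $\Psi$ is quadratically bounded by $\|e_R\|^2$. For $c_1,c_2$ small enough, $V$ is sandwiched between two quadratics in $\zeta=(e_p,e_v,e_R,e_\omega)$. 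Along \eqref{eq:pl_cl_eR}, \eqref{eq:err_ev_learning}--\eqref{eq:err_eomega_learning}, the bound $\|\mathcal E\|\le1$ gives
\[
\dot V\le-\zeta^\top W\zeta+c_3\|\zeta\|\,\|d\|,\qquad W\succ0 .
\]
This yields $\dot V\le -\lambda_{\min}(W)\|\zeta\|^2/2$ whenever $\|\zeta\|\ge 2c_3\|d\|/\lambda_{\min}(W)$.

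\emph{Step 3: ISS conclusion.} The previous inequality is an ISS estimate with respect to $d$. Let $\bar\rho_{\max}$ denote the supremum of the bounded functions $\bar\rho_L,\bar\rho_j$ over $S_{\mathcal X}$. Standard comparison arguments then give
\[
\limsup_{t\to\infty}\|\zeta(t)\|\le c\,\big(\sup_{S_{\mathcal X}}\bar\rho_L+\textstyle\sum_j\kappa_j\sup\bar\rho_j+\vartheta_W\big),
\]
because the exponential term in $\|d\|$ decays. This is the stated proportional scaling, with ultimate bound $\sqrt{\lambda_{\max}/\lambda_{\min}}\cdot 2c_3/\lambda_{\min}(W)$ times the bracket. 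The conclusion holds on $\Omega_\delta$ and hence with probability at least $\delta$.

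\emph{Main obstacle.} The delicate step is keeping the trajectory inside the region of validity: $\{\Psi<\psi\}$ for the attitude quadratic bounds, and $S_{\mathcal X}$, where the GP bounds and \eqref{eq:DeltaW_bound_agentlayer} are assumed. I will handle this with a forward-invariance argument. Assume the disturbance bound is small relative to the gains, so that the ultimate ball lies strictly inside a sublevel set of $V$ contained in $\{\Psi<\psi\}$, and assume initial errors start in that sublevel set (giving the usual almost-global rather than global statement). Then $\dot V<0$ on its boundary, and the sublevel set is invariant. Remaining in $S_{\mathcal X}$ is taken as given by the hypothesis that $S_{\mathcal X}$ contains the relevant closed-loop trajectories.
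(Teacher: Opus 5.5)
Your argument is sound, and it departs from the paper's proof at the decisive step.

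\textbf{What the paper does.} It uses
\[
V=\tfrac12 m_L\|e_v\|^2+\tfrac12 m_L e_p^\top K_p e_p+\tfrac12 e_\omega^\top J_L e_\omega+\tfrac12 e_R^\top K_R e_R ,
\]
with no cross terms, on a single interval $[t_k,t_{k+1})$. Young's inequality gives $\dot V\le-(k-\varepsilon)\|(e_v,e_\omega)\|^2+\tfrac{1}{4\varepsilon}\Xi(t)^2$. The paper then asserts $\|(e_v,e_\omega)\|^2\ge c_1V$ to reach $\dot V\le -c_2V+c_3\Xi^2$ and closes with the comparison lemma.

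\textbf{Where it breaks.} That inequality fails whenever $e_v=e_\omega=0$ but $(e_p,e_R)\neq0$. The rotational cross terms also do not cancel: $\tfrac{d}{dt}\tfrac12 e_R^\top K_R e_R=e_R^\top K_R\mathcal E e_\omega$ with $\mathcal E\neq I$ leaves the residual $e_R^\top K_R(\mathcal E-I)e_\omega$. Your cross terms $c_1 e_p^\top e_v$ and $c_2 e_R^\top J_L e_\omega$, together with $\Psi$ (for which $\dot\Psi=e_R^\top e_\omega$), are what make $V$ strict and quadratically sandwiched. So your ISS step is legitimate where the paper's is not.

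\textbf{Other steps you supply.} Using Assumption~1 to reduce to the infinite interval $[T_{\mathrm{end}},\infty)$ is needed, since a bound on one finite interval cannot give an ultimate bound. Requiring a GP bound uniform in the input is also needed. The pointwise Assumption~2 plus a union bound over agents does not produce an event valid for all $t$, which the paper's $\Omega_\delta$ demands.

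\textbf{Trade-off.} The paper's route nominally gives a shorter, global-looking statement with matrix gains and no smallness conditions. Yours gives a correct proof at the price of hypotheses not in the theorem:
\begin{itemize}
\item initial errors in a sublevel set inside $\{\Psi<\psi<2\}$;
\item a disturbance level small relative to the gains;
\item scalar $K_R$.
\end{itemize}
The first two are essentially unavoidable for a proportional ultimate bound on $SO(3)$. The restoring torque $\|K_R e_R\|\le\|K_R\|$ is bounded, and trajectories starting at the undesired equilibria do not converge. The scalar-$K_R$ restriction can be lifted locally by dominating $e_R^\top K_R(\mathcal E-I)e_\omega$ on a small sublevel set.

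\textbf{One detail to tighten.} In your invariance argument, the smallness condition must also cover the transient $\alpha_W\|e_\lambda(t_k)\|$ after each of the finitely many updates. Alternatively, bound its integrated effect on $\sqrt V$. Either is needed because this transient can be large right after a switch and break $\dot V<0$ on the sublevel-set boundary.
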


\begin{proof}
% --- Replace the confidence event definition by a multiline/aligned version ---
Fix an update interval $t\in[t_k,t_{k+1})$ and split the global confidence level $\delta$
into $\delta_L,\delta_1,\dots,\delta_N$. Define the confidence event
\begin{equation}
\Omega_\delta :=
\Big\{
\begin{aligned}
&\|\tilde f_{L,k}(q_L(t))\|\le \bar\rho_L(q_L(t);\delta_L),\\
&\|f_{\mathrm{uk},j}(\bar q_j(t))-\hat f_{\mathrm{uk},j,k}(\bar q_j(t))\|
\le \bar\rho_j(\bar q_j(t);\delta_j)
\end{aligned}
\Big\} 
\label{eq:Omega_delta_union}
\end{equation}  $\forall j,\ \forall t\ge 0$. 
By Assumption~2 and the union bound (Remark~1), $\mathbb{P}(\Omega_\delta)\ge \delta$ on $S_X$.

Consider the Lyapunov function
\begin{align*}
V(e_p,e_v,e_R,e_\omega)
:= \tfrac12 m_L\|e_v\|^2 + \tfrac12 m_L e_p^\top K_p e_p\\
+ \tfrac12 e_\omega^\top J_L e_\omega + \tfrac12 e_R^\top K_R e_R .\end{align*}

Along trajectories on the constraint manifold $\mathcal M$ we have
$\dot e_p=e_v$ and $\dot e_R=\mathcal E(R_L,R_{L,d})e_\omega$ for a smooth bounded matrix
$\mathcal E(\cdot)$. Using \eqref{eq:err_ev_learning}--\eqref{eq:err_eomega_learning} and cancelling cross terms yields
\begin{equation}
\label{eq:Vdot_bound_38}
\begin{aligned}
\dot V
=& -m_L e_v^\top K_v e_v - e_\omega^\top K_\omega e_\omega
\\&+ e_v^\top(\Delta F+\tilde{\mathbf f}^p_{L,k})
+ e_\omega^\top(\Delta\tau+\tilde{\mathbf f}^\omega_{L,k})\\
\le &-k_v\|e_v\|^2 - k_\omega\|e_\omega\|^2\\
&+ \|(e_v,e_\omega)\|\Big(\|\Delta W\|+\|\tilde{\mathbf f}_{L,k}(q_L)\|\Big),
\end{aligned}
\end{equation}
where $k_v:=m_L\lambda_{\min}(K_v)$ and $k_\omega:=\lambda_{\min}(K_\omega)$.

On $\Omega_\delta$, $\|\tilde{\mathbf f}_{L,k}(q_L)\|\le \bar\rho_L(q_L;\delta)$, hence
\begin{equation}
\label{eq:Vdot_bound_k}
\dot V \le -k\|(e_v,e_\omega)\|^2
+ \|(e_v,e_\omega)\|\Big(\|\Delta W\|+\bar\rho_L(q_L;\delta)\Big),
\end{equation}
with $k:=\min\{k_v,k_\omega\}$.

Applying Young's inequality, for any $\varepsilon\in(0,k)$,
\begin{equation}
\label{eq:Young_39}
\begin{aligned}
\|(e_v,e_\omega)\|\Big(\|\Delta W\|+&\bar\rho_L(q_L;\delta_L)\Big)
\le \varepsilon\|(e_v,e_\omega)\|^2\\
&+\frac{1}{4\varepsilon}
\Big(\|\Delta W\|+\bar\rho_L(q_L;\delta_L)\Big)^2.
\end{aligned}
\end{equation}

Combining \eqref{eq:Vdot_bound_k} and \eqref{eq:Young_39} gives
\begin{equation}
\label{eq:Vdot_bound_40}
\begin{aligned}
\dot V
\le -(k-&\varepsilon)\|(e_v,e_\omega)\|^2\\
&\quad+\frac{1}{4\varepsilon}
\Big(\|\Delta W\|+\bar\rho_L(q_L;\delta_L)\Big)^2,
\, \text{on }\Omega_\delta .
\end{aligned}
\end{equation}

Next we use the agent-layer interface (Assumption~\ref{ass:agent_layer_iss}) on $[t_k,t_{k+1})$. Define the \emph{decaying transient}
\[
\chi_k(t)\;:=\;\alpha_W e^{-\gamma_W(t-t_k)}\|e_\lambda(t_k)\|,\qquad t\in[t_k,t_{k+1}),
\]
and the aggregated uncertainty level
\[
\Xi(t) := \bar\rho_L(q_L(t);\delta_L) + \vartheta_W +
\sum_{j=1}^N \kappa_j\,\bar\rho_j(\bar q_j(t);\delta_j) + \chi_k(t).\]
Then $\|\Delta W(t)\|+\bar\rho_L(q_L(t);\delta_L)\le \Xi(t)$ and \eqref{eq:Vdot_bound_k} implies
\begin{equation}
\dot V
\le
-(\underline{k}-\varepsilon)\,\|(e_v,e_\omega)\|^2
+\frac{1}{4\varepsilon}\,\Xi(t)^2
\qquad\text{on } \Omega_\delta.
\label{eq:Vdot_step3}
\end{equation}

Finally, relate $\|(e_v,e_\omega)\|^2$ to $V$. Since $V\ge \tfrac12 m_L\|e_v\|^2+\tfrac12\lambda_{\min}(J_L)\|e_\omega\|^2$,
there exists $c_1>0$ such that $\|(e_v,e_\omega)\|^2\ge c_1 V$ for all $(e_p,e_v,e_R,e_\omega)$.
Thus, for some constants $c_2,c_3>0$, $\dot V \le -c_2 V + c_3\,\Xi(t)^2$ on $\Omega_\delta$.

By the comparison lemma, for all $t\in[t_k,t_{k+1})$,
\[
V(t)\le e^{-c_2(t-t_k)}V(t_k)+c_3\int_{t_k}^{t} e^{-c_2(t-s)}\Xi(s)^2\,ds.
\]
Since $S_{\mathcal X}$ is compact, $\bar\rho_L(\cdot;\delta)$ and $\bar\rho_j(\cdot;\delta)$ are bounded on $S_{\mathcal X}$,
and $\chi_k(t)$ decays exponentially. Therefore the integral term is uniformly bounded and yields a uniform ultimate bound.
Moreover, up to the exponentially decaying transient $\chi_k(t)$, the ultimate bound scales with
$\bar\rho_L(q_L;\delta_L)+\sum_{j=1}^N\bar\rho_j(\bar q_j;\delta_j)+\vartheta_W$. \hfill$\square$
\end{proof}

\begin{remark}
\label{rem:gp_covariance_bound}
Under Lemma~1, on the confidence event $\Omega_\delta$ one may choose $\bar\rho_L(q_L;\delta_L)=\beta_L(\delta_L)\|\Sigma^{1/2}_{L,k}(q_L)\|$ and  $\bar\rho_j(\bar q_j;\delta_j)=\beta_j(\delta_j)\|\Sigma^{1/2}_{j,k}(\bar q_j)\|$, $\forall j$, for all $q_L,\bar q_j\in S_X$.
Hence, up to the decaying transient in \eqref{eq:DeltaW_bound_agentlayer}, the ultimate tracking accuracy
in Th.~\ref{thm:learning_tracking} scales with the GP predictive uncertainties at the payload and agent states $\displaystyle{\beta_L(\delta)\big\|\Sigma_{L,k}^{1/2}(q_L)\big\|
+\sum_{j=1}^N \beta_j(\delta)\big\|\Sigma_{j,k}^{1/2}(\bar q_j)\big\|
+\vartheta_W}$, and improves monotonically as the covariances contract with additional data.
\end{remark}

\begin{remark}The proposed learning-based controller admits a clear robustness--performance trade-off.
The GP mean provides feedforward compensation of unknown dynamics, while the predictive covariance quantifies the remaining
model mismatch and determines the high-probability tracking accuracy. In parallel, improved contact-force realization reduces
$\Delta W$, tightening the ultimate bound without altering the leader--follower structure.\hfill$\diamond$\end{remark}

%--------------------------------------------------------
\begin{corollary}
\label{cor:asymptotic_recovery}
Under the assumptions of Theorem~\ref{thm:learning_tracking}, fix $\delta\in(0,1)$ and split it into
$\delta_L,\delta_1,\dots,\delta_N$ as in Remark \ref{rem1}.
If, along a sequence of learning updates $k\to\infty$, the predictive covariance satisfies
$\displaystyle{\sup_{q_L\in S_{\mathcal X}}}\big\|\Sigma_{L,k}^{1/2}(q_L)\big\|\to0$, and the agent-layer mismatch satisfies
$\displaystyle{\sup_{t\ge 0}}\|\Delta W_k(t)\|\to0$, then the ultimate bound in Theorem~\ref{thm:learning_tracking} converges to zero on $\Omega_\delta$.
Consequently, in the limit $k\to\infty$ the learning-based closed loop recovers the nominal asymptotic tracking behavior.
\end{corollary}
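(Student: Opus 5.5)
The plan is to re-run the comparison-lemma estimate from the proof of Theorem~\ref{thm:learning_tracking}, but now track explicitly how the ultimate bound depends on the learning index $k$. On the confidence event $\Omega_\delta$ (whose probability is at least $\delta$ by Remark~\ref{rem1}, uniformly in $k$, since $\delta_L,\delta_j$ are fixed), the proof gives on each interval $[t_k,t_{k+1})$
\[
V(t)\le e^{-c_2(t-t_k)}V(t_k)+\frac{c_3}{c_2}\sup_{s\in[t_k,t)}\Xi_k(s)^2 .
\]
Here the aggregated level can be rewritten as $\Xi_k\le \bar\rho_L(q_L;\delta_L)+\|\Delta W_k\|$, using $\|\Delta W_k\|$ directly instead of its bound \eqref{eq:DeltaW_bound_agentlayer}. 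Hence the ultimate bound on $V$, and therefore on $\|(e_p,e_v,e_R,e_\omega)\|$ via the quadratic bounds of $V$ (with $\Psi$ near the identity for the attitude part), is controlled by
\[
b_k:=\frac{c_3}{c_2}\Big(\sup_{S_{\mathcal X}}\bar\rho_L+\sup_{t}\|\Delta W_k(t)\|\Big)^2 .
\]

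The key steps, in order, are the following.

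First, invoke Remark~\ref{rem:gp_covariance_bound} to take $\bar\rho_L(q_L;\delta_L)=\beta_L(\delta_L)\|\Sigma_{L,k}^{1/2}(q_L)\|$, so that $\sup_{S_{\mathcal X}}\bar\rho_L\le \|\beta_L(\delta_L)\|\sup\|\Sigma^{1/2}_{L,k}\|\to 0$.

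Second, the step I expect to be the main obstacle: $\beta_L(\delta_L)$ depends on $N_L$ and $\gamma_{L,i}$ through \eqref{eq:beta_payload}, and these grow with $k$. I would handle this by interpreting the hypothesis as applying to the product $\beta_L\|\Sigma^{1/2}_{L,k}\|$, i.e. to $\bar\rho_L$ itself. Alternatively, I would invoke the final frozen dataset of Assumption~\ref{ass:dataset}, so that $\beta_L$ is a fixed constant. I would state this explicitly in the proof.

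Third, the hypothesis $\sup_t\|\Delta W_k\|\to0$ kills the second term. This absorbs the agent-level contributions $\kappa_j\bar\rho_j$, the noise term $\vartheta_W$ and the transient $\chi_k$ at once, so the $\bar\rho_j$ do not need to vanish separately.

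Fourth, conclude that $b_k\to0$, so the ultimate bound converges to zero on $\Omega_\delta$.

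For the recovery statement, I would note that with $\Delta W\equiv0$ and $\tilde{\mathbf f}_{L,k}\equiv0$ the error dynamics \eqref{eq:err_ev_learning}--\eqref{eq:err_eomega_learning} reduce to the nominal ones \eqref{eq:pl_cl_ep}--\eqref{eq:pl_cl_eom}. These are exponentially stable, and almost-globally so on $SO(3)$, as established in Section~IV-\ref{subsec:payload_wrench_tracking}. Since the perturbation tends to zero, the standard vanishing-perturbation / converging-input ISS argument gives $\limsup_{t\to\infty}\|e\|\to 0$ as $k\to\infty$.

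A remaining point of care is that $V$ must not reset upward at update times. Continuity of the error state across $t_k$ ensures $V(t_k^-)=V(t_k)$, so chaining the interval estimates produces a nonincreasing envelope bounded by $\max\{V(t_k),b_k\}$.
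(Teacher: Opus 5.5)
Your argument is correct, taking the Theorem's estimate as given. It has the same skeleton as the paper's proof: work on $\Omega_\delta$, use the input-to-state estimate behind Theorem~\ref{thm:learning_tracking}, show the disturbance level vanishes, and fall back on the nominal exponential stability of \eqref{eq:pl_cl_ep}--\eqref{eq:pl_cl_eom}.

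Your handling of the agent layer differs from the paper's, and it proves more. The paper feeds $\bar d_k=\sup\bar\rho_{L,k}+\sup_t\|\Delta W_k\|+\vartheta_W$ into a class-$\mathcal K$ gain. It therefore only obtains $\bar d_k\to\vartheta_W$ and has to split cases: the ultimate bound tends to zero when $\vartheta_W=0$ and to $\gamma(\vartheta_W)$ otherwise, which is weaker than the corollary as stated. You insert $\|\Delta W_k\|$ itself into \eqref{eq:Vdot_bound_k} instead of its majorant \eqref{eq:DeltaW_bound_agentlayer}. This shows that $\vartheta_W$, $\kappa_j\bar\rho_j$ and $\chi_k$ are already contained in $\sup_t\|\Delta W_k\|$, so the paper's extra $\vartheta_W$ is double counting and the limit is zero unconditionally.

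Your chaining across update times also replaces the paper's unproved global $\mathcal{KL}$ template. To get the envelope $\max\{V(t_k),b_k\}$, use the sharper comparison bound $V(t)\le e^{-c_2(t-t_k)}V(t_k)+\big(1-e^{-c_2(t-t_k)}\big)b_k$. The form you display only yields $V(t_k)+b_k$.

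\textbf{First caution: the $k$-dependence of $\beta_L$.} The paper silently ignores this dependence. Keep your first fix, stating the hypothesis for $\bar\rho_{L,k}=\beta_{L,k}\|\Sigma^{1/2}_{L,k}\|$, and drop the second. If Assumption~\ref{ass:dataset} freezes the data, it freezes $\Sigma_{L,k}$ as well. With $\sigma_L^2>0$ and finitely many samples, the posterior variance is then strictly positive on $S_{\mathcal X}$, so $\sup\|\Sigma^{1/2}_{L,k}\|\to0$ could never hold. The first fix is also what makes ``on $\Omega_\delta$'' meaningful for all $k$ at once. A single event requires a bound uniform over updates, as in \cite{Srinivas2012}, whose $\beta$ grows with the data. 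Separate per-$k$ events of probability at least $\delta$ need not have an intersection of probability $\delta$.

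\textbf{Second caution: the imported Lyapunov inequality.} The inequality $\dot V\le -c_2V+c_3\Xi^2$ is not actually delivered by the Theorem's proof. $\dot V$ contains no negative $e_p$ or $e_R$ terms, so $\|(e_v,e_\omega)\|^2\ge c_1V$ fails. In addition, the rotational cross term $e_R^\top K_R(\mathcal E-I)e_\omega$ does not vanish. The paper's corollary proof rests on the same step, so this is not a defect relative to it. A rigorous version of either argument, however, needs a strict Lyapunov function on a sublevel set where $\Psi<2$. One option, as in \cite{lee2010geometric}, is to take scalar $K_R=k_RI$, use $k_R\Psi$ in place of $\tfrac12e_R^\top K_Re_R$, and add small $e_p$--$e_v$ and $e_R$--$e_\omega$ cross terms.
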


\begin{proof}
Fix $\delta\in(0,1)$ and work on the event $\Omega_\delta$ (so that all GP error bounds invoked in
Theorem~\ref{thm:learning_tracking} hold uniformly on $S_{\mathcal X}$, with confidence levels
$\delta_L$ for the payload and $\delta_j$ for the agents).
From Theorem~\ref{thm:learning_tracking}, there exist class-$\mathcal{K}\mathcal{L}$ and class-$\mathcal{K}$ functions
$\beta(\cdot,\cdot)$ and $\gamma(\cdot)$ and a nonnegative term $\chi_k(t)$ (the decaying transient from
\eqref{eq:DeltaW_bound_agentlayer}) such that, for all $t\ge 0$,
\begin{align}
\|(e_p,e_v,e_R,e_\omega)(t)\|
\le&
\beta\!\left(\|(e_p,e_v,e_R,e_\omega)(0)\|,t\right)\nonumber\\
&+\gamma\!\left(\bar d_k\right)
+\chi_k(t),
\label{eq:UUB_bound_template}
\end{align}
where the effective disturbance level $\bar d_k$ can be chosen as
\begin{equation}
\bar d_k
\;:=\;
\sup_{q_L\in S_{\mathcal X}}\bar\rho_{L,k}(q_L;\delta_L)
\;+\;
\sup_{t\ge 0}\|\Delta W_k(t)\|
\;+\;\vartheta_W,
\label{eq:dk_def}
\end{equation}
(here $\bar\rho_{L,k}$ denotes the payload GP high-probability error radius on update interval $k$).

By Lemma~\ref{lem:gp_error_bound}, on $\Omega_\delta$ we may take
$\bar\rho_{L,k}(q_L;\delta_L)=\beta_L(\delta_L)\|\Sigma_{L,k}^{1/2}(q_L)\|$, hence
\[
\sup_{q_L\in S_{\mathcal X}}\bar\rho_{L,k}(q_L;\delta_L)
\;\le\;
\beta_L(\delta_L)\sup_{q_L\in S_{\mathcal X}}\|\Sigma_{L,k}^{1/2}(q_L)\|
\;\longrightarrow\;0.
\]
By assumption, $\sup_{t\ge 0}\|\Delta W_k(t)\|\to 0$ as $k\to\infty$.
Therefore, from \eqref{eq:dk_def} we obtain $\bar d_k\to \vartheta_W$.
If $\vartheta_W=0$ (idealized case with no measurement/estimation artifacts), then $\bar d_k\to 0$ and
\eqref{eq:UUB_bound_template} implies that the ultimate bound $\gamma(\bar d_k)$ converges to zero, i.e., the closed loop
recovers the nominal asymptotic tracking behavior as $k\to\infty$.

If $\vartheta_W>0$, then the ultimate bound converges to $\gamma(\vartheta_W)$, i.e., performance is limited only by the
non-learning artifacts captured by $\vartheta_W$.\hfill$\square$
\end{proof}
%========================================================
\section{Simulation Results}
\label{sec:simulations}

Next, we validate the hierarchical learning-based leader--follower architecture for cooperative rigid-payload transport.
The simulations are designed to: (i) verify payload tracking under rigid grasp constraints using the coupled agent--payload DAE \eqref{eq:dae_compact},
(ii) isolate the roles of payload-level and agent-level learning within the two-layer realization viewpoint of Section IV-\ref{subsec:tracking_objective}, and
(iii)  connect the observed tracking accuracy to the high-probability UUB bound of Theorem~\ref{thm:learning_tracking}, which depends on
the payload GP uncertainty, the agent GP uncertainties, and the agent-to-payload realization interface through the induced wrench mismatch $\Delta W$.

%\subsection{Simulation setup and performance metrics}
%\label{subsec:sim_setup}

We consider a team of $N=2$ aerial manipulators cooperatively transporting a rigid payload.
Each agent establishes $n=2$ rigid contacts with the payload, yielding a total of $Nn=4$ contacts and a stacked contact-force vector
$\lambda\in\mathbb R^{12}$.
The coupled dynamics are simulated using the full agent--payload DAE model, including constraint reactions and contact wrenches. The payload reference $(p_{L,d}(t),R_{L,d}(t))$ consists of a smooth position trajectory with bounded derivatives and a constant desired attitude.
This excites the translational dynamics while keeping the rotational evaluation interpretable through $\|e_R(t)\|$ and $\|e_\omega(t)\|$.

To emulate realistic mismatch and external effects, we introduce:
(i) parametric mismatch in payload mass and inertia (controller uses nominal $(m_L,J_L)$ different from the true values),
(ii) additive unknown force/torque disturbances acting directly on the payload dynamics, and
(iii) local agent-level unmodeled effects affecting contact-force realization (e.g., unmodeled aerodynamic terms and input biases).
In addition, measurement noise and numerical differentiation are included in the force reconstruction pipeline, producing a bounded realization floor
captured by the constant $\vartheta_W$ in Assumption~\ref{ass:agent_layer_iss}.
All simulations start from consistent initial conditions on the constraint manifold $\mathcal{M}$.

Each agent is assumed to have access to estimates of the contact forces $\lambda_{j,b}$ via model-based force observers (disturbance/unknown-input observers) or residual-based force estimation using onboard state estimates and actuator models.
We denote by $\lambda_{\mathrm{cmd}}$ the commanded contact forces produced by the allocation layer and by $\lambda_{\mathrm{app}}$ the applied/realized forces.
The agent-to-payload interface signal is the induced wrench mismatch $\Delta W(t)$ which drives the payload as an additive disturbance in \eqref{eq:err_ev_learning}--\eqref{eq:err_eomega_learning}.

We compare three controller configurations under identical initial conditions and reference trajectories:

\begin{itemize}
\item Case (C1) Nominal (no learning):
the learning terms are disabled at both layers, i.e., $\hat{\mathbf f}_{uk,L}\equiv 0$ and $\hat{\mathbf f}_{uk,j}\equiv 0$.
\item Case (C2) Payload learning:
the leader uses the payload GP compensation $\hat{\mathbf f}_{uk,L,k}=\mu_{L,k}$ in
\eqref{eq:payload_force_cmd_learning}--\eqref{eq:payload_torque_cmd_learning}, while the agents do not use GP compensation, i.e.,
$\hat{\mathbf f}_{uk,j}\equiv 0$.
\item Case (C3) Payload+agent learning:
both the leader and each agent use their respective GP compensators,
$\hat{\mathbf f}_{uk,L,k}=\mu_{L,k}$ and $\hat{\mathbf f}_{uk,j,k}=\mu_{j,k}$, improving contact-force realization and thereby reducing $\Delta W$.
\end{itemize}

In all cases, the allocation layer is identical and uses \eqref{eq:leader_force}--\eqref{eq:follower_force}.
Thus, differences between (C2) and (C3) reflect the effect of agent-layer learning on the disturbance $\Delta W$.

We report the payload tracking errors $\|e_p(t)\|$, $\|e_v(t)\|$, $\|e_R(t)\|$, and $\|e_\omega(t)\|$,
together with the interface signal $\|\Delta W(t)\|$.
To connect to Theorem~\ref{thm:learning_tracking}, we also plot the predictive standard deviations along the closed-loop trajectories: $\sigma_L(t) := \|\Sigma_{L,k}^{1/2}(q_L(t))\|$, $\sigma_A(t):= \sum_{j=1}^N \|\Sigma_{j,k}^{1/2}(\bar q_j(t))\|$. On the confidence event used in Assumption~\ref{ass:oracle_bound} and Lemma~\ref{lem:gp_error_bound},
the theorem indicates that, up to a decaying transient due to the realization dynamics,
the ultimate tracking accuracy scales with a combination of $\bar\rho_L(q_L;\delta)$, $\sum_j \bar\rho_j(\bar q_j;\delta)$, and $\vartheta_W$.
Accordingly, we examine how reductions in $\sigma_L(t)$ and $\sigma_A(t)$ correlate with reductions in tracking error and  $\|\Delta W(t)\|$.

The GP prior uses a squared-exponential (RBF) covariance with automatic relevance determination (ARD), i.e., one characteristic length-scale per input coordinate, and an additive i.i.d.\ white-noise term. We fit independent scalar GPs per output channel: $6$ GPs for the payload residual (one per wrench component) using the $20$-dimensional payload feature vector, and $9$ GPs per agent for the agent-level residual using the corresponding agent feature vector. In both cases, the kernel hyperparameters (signal variance, ARD length-scales, and noise variance) are learned by maximizing the log marginal likelihood on the associated frozen dataset. Large learned length-scales (near the imposed upper bound) indicate weak sensitivity to the associated feature, whereas smaller values identify the most informative coordinates for predicting the disturbance residual.

Figure~\ref{fig:payload_tracking_se3} reports the position error norm $\|e_p(t)\|$ and the attitude error measure $\Psi(t)$ for (C1)--(C3).
All cases remain bounded and exhibit transient decay followed by nonzero ultimate levels due to residual uncertainties.
Compared with the nominal case (C1), enabling payload learning (C2) reduces the steady tracking offset by compensating the effective payload-level unknown wrench.
Adding agent learning (C3) further tightens the ultimate bound by improving contact-force realization and thereby reducing the induced mismatch $\Delta W$, in agreement with Theorem~\ref{thm:learning_tracking}.
\begin{figure}[h!]
  \centering
  \includegraphics[width=\linewidth]{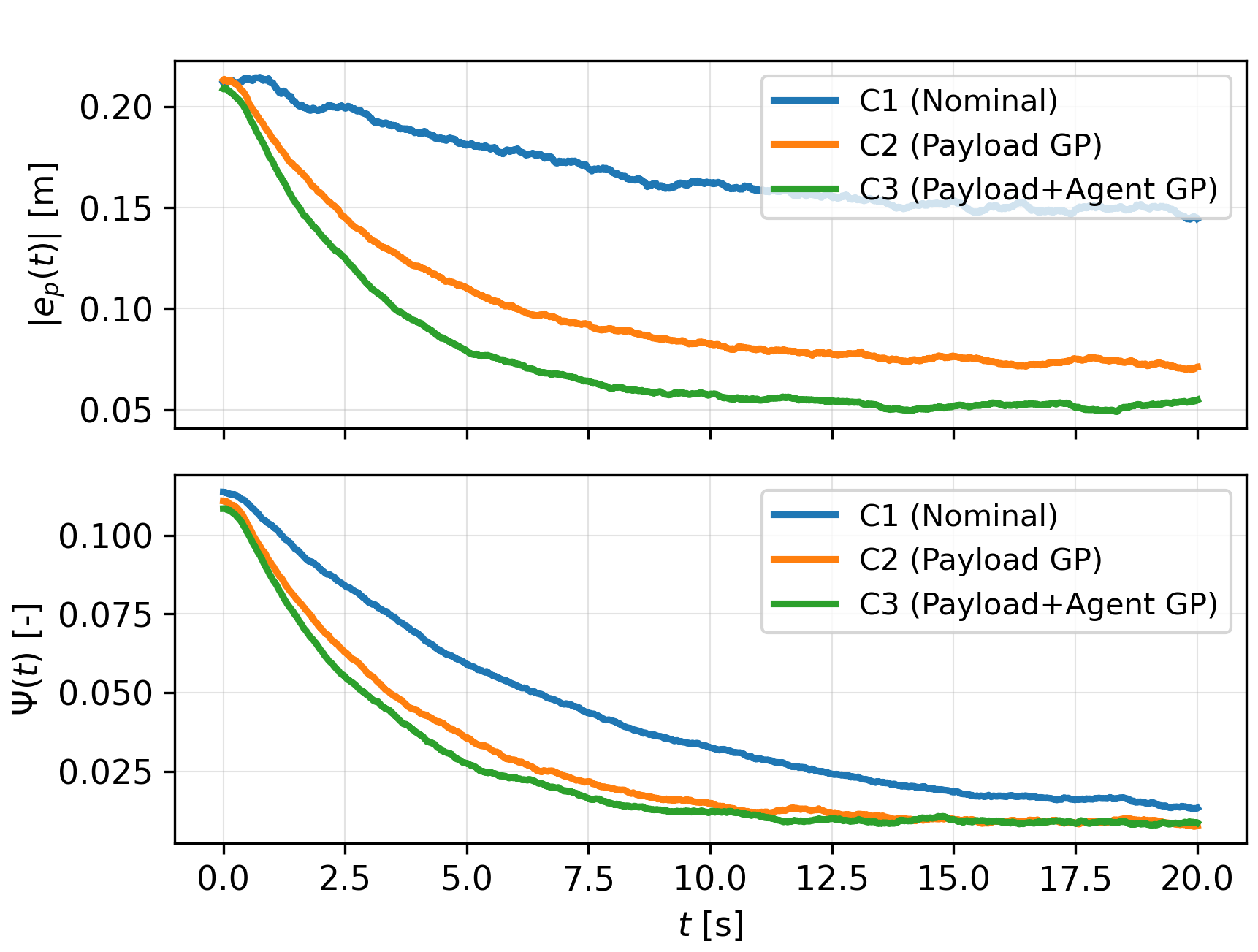}
  \caption{Payload tracking on $SE(3)$ for (C1)--(C3). Top: position error norm $\|e_p(t)\|$. Bottom: attitude error measure $\Psi(t)$. Ultimate bounds decrease from (C1) to (C2) to (C3), consistent with Theorem~\ref{thm:learning_tracking}.}
\label{fig:payload_tracking_se3}
\end{figure}

%\paragraph{Agent-to-payload interface: induced wrench mismatch $\Delta W$.}
Figure~\ref{fig:deltaW} plots $\|\Delta W(t)\|$.
The reduction from (C2) to (C3) highlights the role of agent-layer learning: improving contact-force realization reduces the effective disturbance entering the payload error dynamics.
This observation supports the modular two-layer design, where agent uncertainties affect payload tracking only through $\Delta W$.

\begin{figure}[h!]
  \centering
  \includegraphics[width=\linewidth]{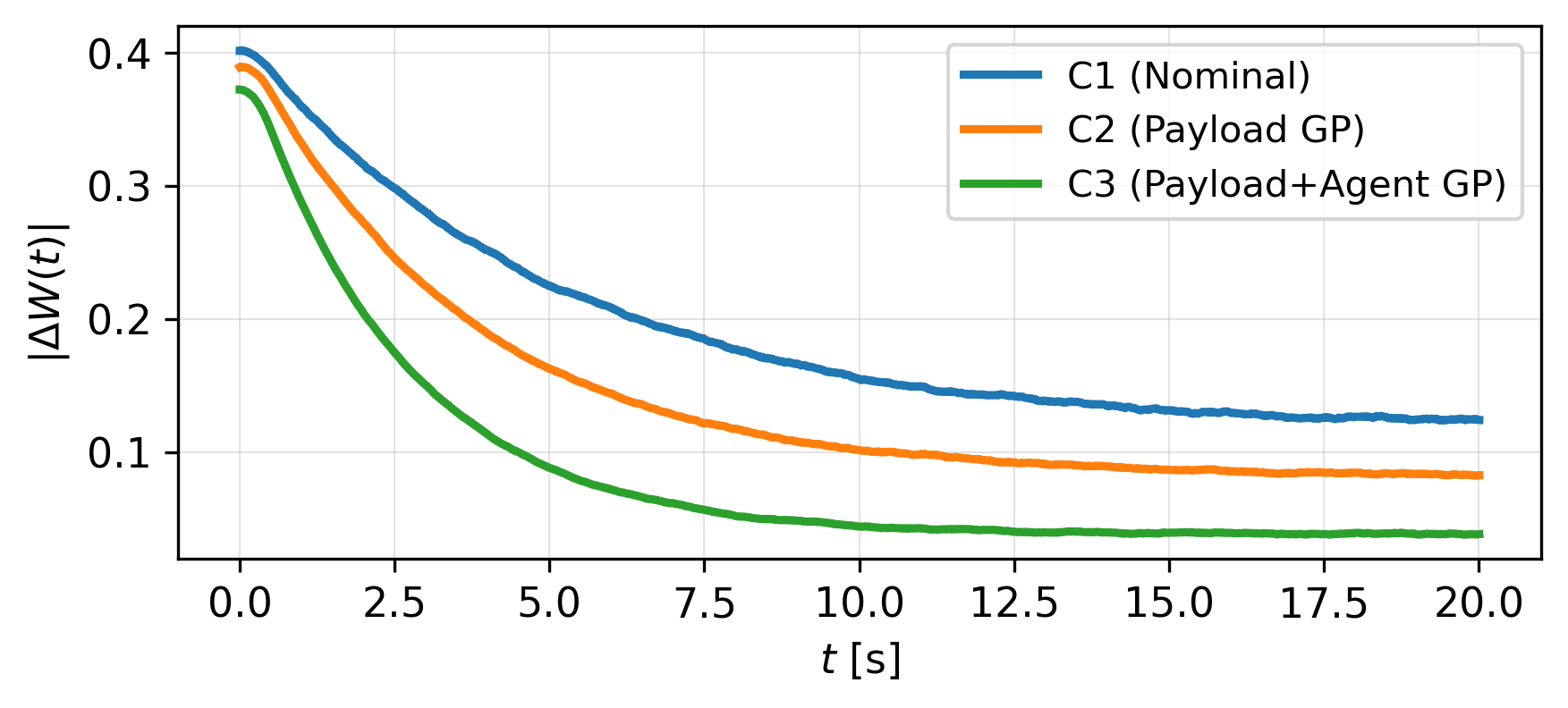}
 \caption{Payload wrench mismatch $\|\Delta W(t)\|$ for (C1)--(C3).
Agent learning reduces the interface disturbance entering the payload error dynamics.}
\label{fig:deltaW}
\end{figure}

%\paragraph{Predictive uncertainty along trajectories.}
Figure~\ref{fig:gp_uncertainty} shows the predictive standard deviations $\sigma_L(t)$ and $\sigma_A(t)$ evaluated along the closed-loop trajectories.
Regions of larger predictive uncertainty correlate with larger transient tracking errors, while better-covered regions (smaller variance) yield tighter tracking.
These trends are consistent with the high-probability residual bounds used in Lemma~\ref{lem:gp_error_bound} and Theorem~\ref{thm:learning_tracking}.

\begin{figure}[h!]
  \centering
  \includegraphics[width=\linewidth]{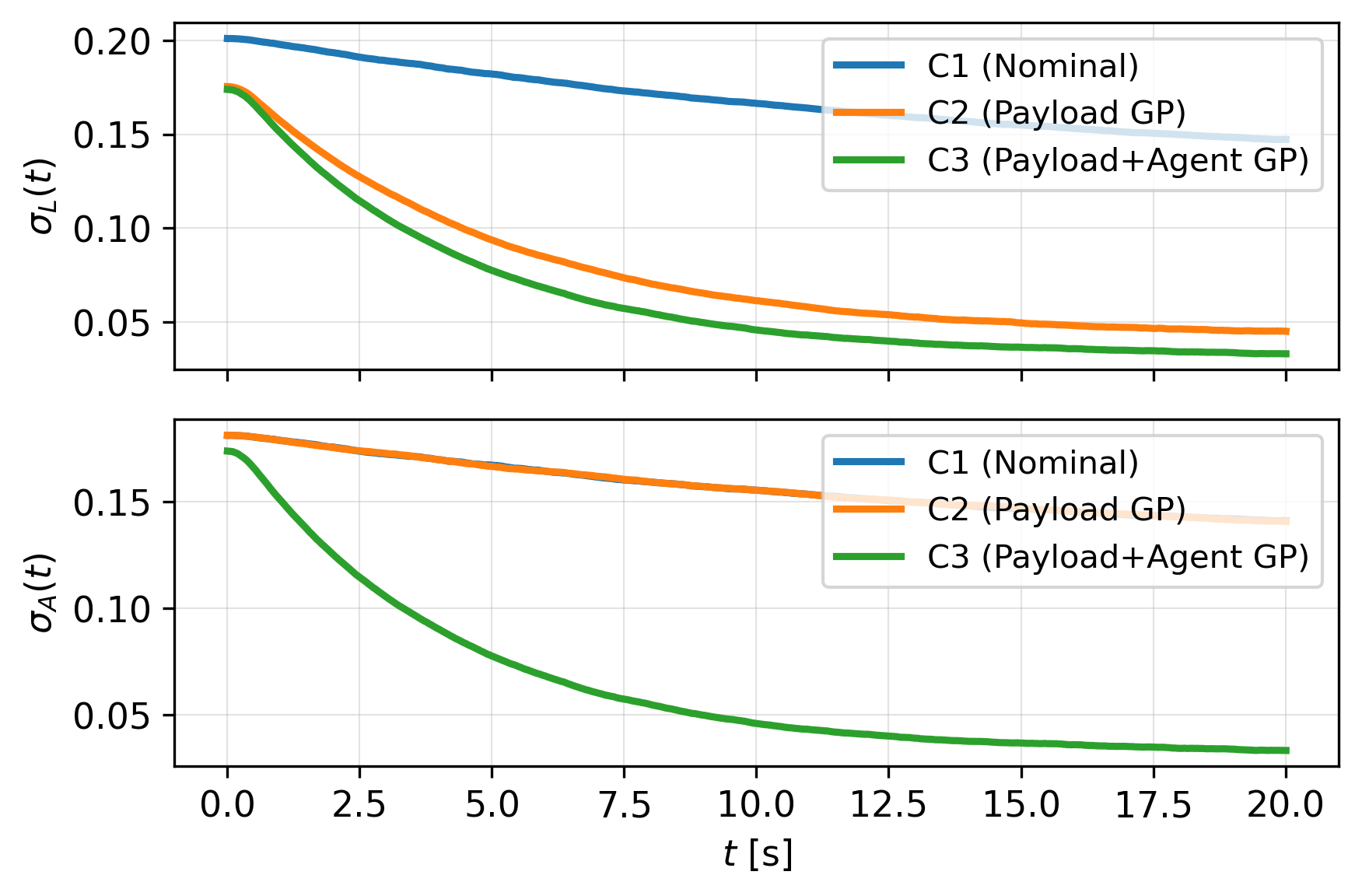}
  \caption{Uncertainty measures along the closed-loop trajectory for (C1)--(C3).
Top: payload-level uncertainty $\sigma_L(t)$ decreases when payload learning is enabled (C2--C3).
Bottom: agent-level uncertainty $\sigma_A(t)$ decreases only when agent learning is enabled (C3); thus $\sigma_A(t)$ remains essentially unchanged from (C1) to (C2).}
  \label{fig:gp_uncertainty}
\end{figure}

%\paragraph{Contact forces and internal-force regulation.}
Figure~\ref{fig:contact_forces} reports the norms of the estimated applied contact forces for each agent.
Across all cases, the force magnitudes remain bounded over the maneuver and do not exhibit high-frequency chattering, indicating that the allocation and realization layers yield a well-conditioned contact-force behavior.
When the internal-force input $\eta$ is activated (see $\|\eta(t)\|$), the contact-force distribution is reshaped along internal-force directions while preserving the commanded payload wrench, in agreement with Lemma~\ref{lem:wrench_consistency}.

%-------------------------------------------------------
% Figures (placeholders)
%-------------------------------------------------------

\begin{figure}[t]
  \centering
  \includegraphics[width=\linewidth]{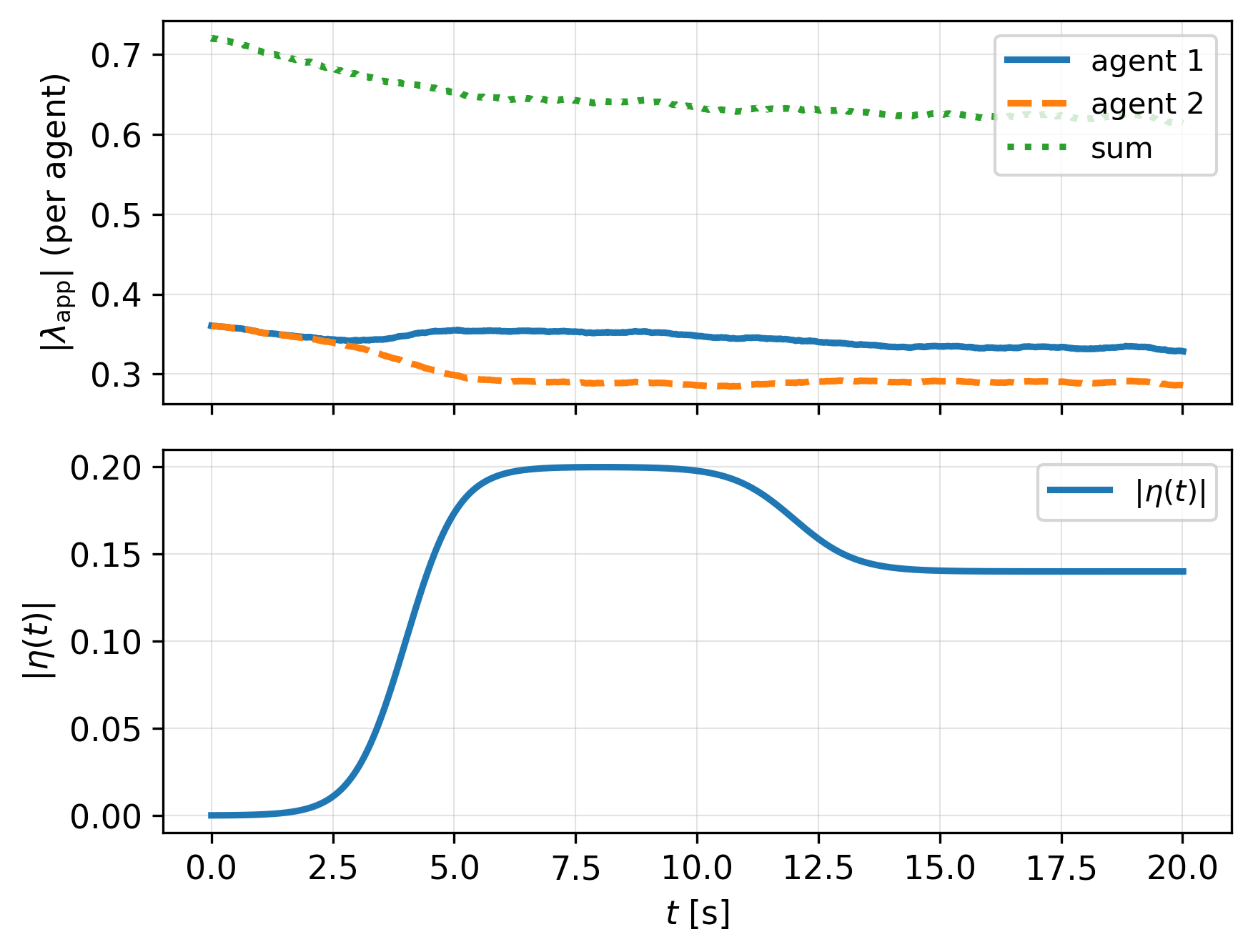}
  \caption{Norms of the estimated applied contact forces for each agent. The forces remain bounded throughout the maneuver and show no high-frequency chattering. The bottom panel shows the internal-force input magnitude $\|\eta(t)\|$; when activated, $\eta$ redistributes contact forces along internal directions without affecting the net payload wrench (Lemma~2).}
  \label{fig:contact_forces}
\end{figure}

Future work will focus on a full experimental implementation on a multi-UAV testbed with rigid grasping, including (i) real-time force reconstruction/estimation for $\lambda_{j,b}$ and online computation of $\Delta W$, (ii) onboard GP training/inference with incremental data management and update scheduling, (iii) allocator and realization layers with actuator/tilt/thrust saturation handling and safety constraints on contact forces and internal-force inputs, and (iv) systematic identification of the disturbance sources (aerodynamics, actuator dynamics, payload parameter drift) to build informative feature vectors and improve generalization across operating conditions. 
\bibliographystyle{IEEEtran}
\bibliography{cluster}

\vspace{12pt}

\end{document}